%% file: main.tex
\documentclass[journal]{IEEEtran}
\usepackage{amsmath,amsfonts}
\usepackage{algorithmic}
\usepackage{algorithm}
\usepackage{array}
\usepackage{textcomp}
\usepackage{stfloats}
\usepackage{url}
\usepackage{verbatim}
\usepackage{graphicx}
\usepackage{cite}
\usepackage{amssymb}
\usepackage{xurl}
\usepackage{xcolor}

\usepackage{caption}  
\usepackage{subcaption}
\usepackage{soul}

\usepackage{amsthm}
\newtheorem{thm}{Theorem}[section]

\newtheorem{prop}[thm]{Proposition}
\newtheorem{rem}{Remark}

\begin{document}

\title{Partitioned Mixed Small Gain--Phase Decentralized Stability Criterion for Power Systems}

\author{Diego Cifelli, Adolfo Anta
\thanks{D.Cifelli and A.Anta are with AIT Austrian Institute of Technology, Vienna, Austria, email: \{diego.cifelli, adolfo.anta\}@ait.ac.at. The authors disclose the use of Claude (Anthropic) and Gemini (Google) in the preparation of this manuscript, for language editing, for drafting text from author-provided input, and for assistance in the derivation of the analytical results. The authors verified all results and take full responsibility for the content of this article.}
}



\maketitle

\begin{abstract}
The increasing penetration of converter-interfaced resources is making power-system stability assessment more challenging, particularly in heterogeneous grids containing both grid-forming and grid-following converters. Existing decentralized mixed small-gain and small-phase criteria provide scalable stability certificates, but they require all converters to satisfy the same type of condition at a given frequency. As a result, they cannot simultaneously exploit the low gain of grid-following converters and the favorable phase properties of grid-forming converters, leading to unnecessary conservatism. This paper proposes a partitioned mixed gain–phase decentralized stability criterion that allows distinct subsets of converters to satisfy different local requirements at the same frequency. Specifically, one subset can be certified through small-gain bounds, while the complementary subset is certified through small-phase bounds. The admissible trade-off between gain and phase margins is determined by a network-dependent quadratic constraint, yielding a technology-aware stability certificate that remains local at the converter level. The paper characterizes the admissible set of gain and phase bounds, establishes useful convexity and boundedness properties, and develops a practical procedure for selecting these bounds. The proposed method is demonstrated on heterogeneous systems containing grid-forming and grid-following converters, including a two-converter system and the IEEE 39-bus system, outperforming the standard decentralized small-gain and small-phase conditions.
\end{abstract}

\begin{IEEEkeywords}
Decentralized stability conditions, Converter-interfaced generation, grid-forming converters, grid-following converters
\end{IEEEkeywords}

\input{sec1_Introduction}
\input{sec2_Stability}
\input{sec3_DecConditions}

\input{sec4_ConverterWise}
\input{sec5_ParSelection}
\input{sec6_StudyCases}
\input{secN_Conclusion}

\bibliographystyle{IEEEtran}
\bibliography{DecCondition_bib}

\appendices

\section{Proof of Propositions}
\subsection{Proof of Proposition \ref{prop:gamma_conv}}
\label{prof:gamma_conv}
Fix $\kappa$, and $\theta$, and denote by $P(\gamma)$ the left-hand side of \eqref{eq:Jnet_QC_dec}, i.e.,
\begin{equation}
    P(\gamma)=
    -\gamma \underbrace{\begin{bmatrix}
        [\mathbf{H}_{net}^{gg}]^* \\ [\mathbf{H}_{net}^{g\phi}]^*
    \end{bmatrix}
    \begin{bmatrix}
        \mathbf{H}_{net}^{gg} & \mathbf{H}_{net}^{g\phi}
    \end{bmatrix}}_{A}+
    \frac{1}{\gamma}
    \underbrace{\begin{bmatrix}
        I_m & 0 \\
        0 & 0
    \end{bmatrix}}_{B}+C
\end{equation}
Since $(\tilde{\gamma},\theta) \in \mathcal{L}_\kappa$, then $P(\tilde{\gamma}) \succeq 0$.
Then, $P(\gamma)$ can be rewritten as $P(\gamma) = P(\tilde{\gamma})+(P(\gamma)-P(\tilde{\gamma}))$, so $P(\gamma) \succeq 0$ if $P(\gamma)-P(\tilde{\gamma}) \succeq 0$. By direct computation we obtain:
\begin{equation}
    P(\gamma)-P(\tilde{\gamma}) = -(\gamma-\tilde{\gamma})A+\left(\frac{1}{\gamma}-\frac{1}{\tilde{\gamma}}\right)B
\end{equation}
Since $\gamma \leq \tilde{\gamma}$ and $A,B \succeq 0$, then $P(\gamma)-P(\tilde{\gamma}) \succeq 0$. Therefore, $P(\gamma)\succeq 0$ and $(\gamma,\theta) \in \mathcal{L}_\kappa$.
\subsection{Proof of Proposition \ref{prop:theta_conv}}
\label{prof:theta_conv}
Let $\gamma$ and $\kappa$ be fixed, and write \eqref{eq:Jnet_QC_dec} as
\begin{equation*}
    \begin{bmatrix}
        P & Q + \kappa e^{j\theta} [\mathbf{H}_{net}^{\phi g}]^* \\
        Q^* + \kappa e^{-j\theta} \mathbf{H}_{net}^{\phi g} &
        R + \kappa e^{j\theta}[\mathbf{H}_{net}^{\phi\phi}]^*
          + \kappa e^{-j\theta}\mathbf{H}_{net}^{\phi\phi} 
    \end{bmatrix} \succeq 0
\end{equation*}
with $P=\gamma^{-1}I_m-\gamma[\mathbf{H}_{net}^{gg}]^*\mathbf{H}_{net}^{gg}$, $Q=-\gamma[\mathbf{H}_{net}^{gg}]^*\mathbf{H}_{net}^{g\phi}$, and $R=-\gamma[\mathbf{H}_{net}^{g\phi}]^*\mathbf{H}_{net}^{g\phi}$. We assume $P \succ 0$, since otherwise $\mathcal{L}_\kappa(\omega)$ is empty. By the Schur complement, \eqref{eq:Jnet_QC_dec} is equivalent to
\begin{equation*}
    S(\theta) = A_0 + \kappa e^{-j\theta} B + \kappa e^{j\theta} B^* \succeq 0,
\end{equation*}
with $A_0 = R - Q^*P^{-1}Q - \kappa^2\, \mathbf{H}_{net}^{\phi g} P^{-1} [\mathbf{H}_{net}^{\phi g}]^*$ and $B = \mathbf{H}_{net}^{\phi\phi} - \mathbf{H}_{net}^{\phi g} P^{-1} Q$, where $A_0 \preceq 0$. For any unit vector $x$, the condition $x^*S(\theta)x \geq 0$ reads
$
    2\kappa \operatorname{Re}\!\left(e^{-j\theta}\, x^*Bx\right)
    \;\geq\; -x^*A_0x \;\geq\; 0,
$
which confines $\theta$ to an arc of length at most $\pi$ centered at $\angle(x^*Bx)$. The set of admissible $\theta$ is therefore an intersection of arcs of length at most $\pi$. The intersection is itself an arc of length at most $\pi$, and the admissible values of $\theta$ form an interval, which proves the claim.
\subsection{Proof of Proposition \ref{prop:conditions_with_bounds}}
\label{prof:conditions_with_bounds}
    We first show that the existence of such a $\theta$ implies \eqref{eq:thA:2_1}--\eqref{eq:thA:2_3}. Since $\theta \ge \underline{\theta}$, \eqref{eq:thA:2_1} follows from \eqref{eq:phase_upper_bound}; similarly, \eqref{eq:thA:2_2} follows from \eqref{eq:phase_lower_bound} and $\theta \le \overline{\theta}$. Isolating $\theta$ in \eqref{eq:phase_upper_bound} and \eqref{eq:phase_lower_bound} for the converters attaining $\max_{i\in\mathcal{I}_\phi}\overline{\phi}(J_{C_i}(j\omega))$ and $\min_{i\in\mathcal{I}_\phi}\underline{\phi}(J_{C_i}(j\omega))$ yields $L<\theta<U$, with $L:=-\frac{\pi}{2}-\min_{i\in\mathcal{I}_\phi}\underline{\phi}(J_{C_i}(j\omega))$ and $U:=\frac{\pi}{2}-\max_{i\in\mathcal{I}_\phi}\overline{\phi}(J_{C_i}(j\omega))$, and \eqref{eq:thA:2_3} follows from $L<U$.

Conversely, assume \eqref{eq:thA:2_1}--\eqref{eq:thA:2_3} hold. A $\theta$ satisfies \eqref{eq:phase_upper_bound}--\eqref{eq:phase_lower_bound} if and only if $\theta \in (L,U)$; it thus suffices to show $[\underline{\theta},\overline{\theta}] \cap (L,U) \neq \emptyset$. Condition \eqref{eq:thA:2_3} guarantees $L<U$, while \eqref{eq:thA:2_1} and \eqref{eq:thA:2_2} give $U>\underline{\theta}$ and $L<\overline{\theta}$, respectively; hence the intersection is non-empty. Finally, by Proposition~\ref{prop:theta_conv}, every $\theta\in[\underline{\theta},\overline{\theta}]$ satisfies $(\gamma,\theta)\in\mathcal{L}_\kappa(\omega)$, so any $\theta$ in the intersection is admissible.

\end{document}

%% file: sec1_Introduction.tex
\section{Introduction}

The increasing penetration of converter-interfaced resources is changing the dynamic behavior of power systems and has already led to unexpected oscillatory phenomena in practical installations \cite{Cigre2024}. As a result, stability assessment methods for converter-dominated grids must scale to large numbers of devices, heterogeneous technologies, and changing operating conditions. Classical small-signal stability approaches, such as eigenvalue analysis or generalized Nyquist criteria, either require white-box models, do not scale well in large grids or become difficult to interpret for large systems. 
In addition, many system-level studies aggregate the dynamics of each component and provide only a binary stable/unstable answer. For practical grid integration, this is not sufficient: a useful criterion should also provide intuition on the units responsible for a possible instability, and what type of controller modification is needed to recover stability.

Among other factors, these requirements have motivated the development of
decentralized frequency-domain stability criteria. The objective is
to certify the stability of a multi-converter system through local
conditions on each converter, together with a network-dependent
condition. Such criteria are attractive because they can be used
modularly: a converter behaviour can be assessed without requiring full access to the internal models of
all other devices. Moreover, they can provide valuable insights for control design, e.g., they can indicate
whether a converter should reduce its gain in the low-frequency range.

A well-established decentralized certificate is passivity. If the network and all converters are passive, stability follows
independently of the detailed interconnection. This makes
passivity appealing, and related requirements are starting to
appear in grid-code specifications for grid-forming converters
\cite{fingridGridCode}. However, passivity is also conservative: at low frequencies, in the
synchronous $dq$ frame, the constant-power behavior of converters
introduces fundamental non-passive characteristics that cannot be eliminated
simply by tuning inner loops. For this reason, recent work has
proposed less restrictive alternatives, including passivity indices,
mixed gain--phase criteria, scaled relative graph
methods, DW shells and parametric
certificates for specific grid-forming control structures
\cite{chenExtendedFrequencyDomainPassivity2025a,chenUnifiedFlexibleFrequencyDomain2025,huangGainPhaseDecentralized2024,
baron2025decentralized,woolcockMixedGainPhase2023,deyPassivityBasedDecentralizedCriteria2023,zhang2025phantom,häberle2025decentralizedparametricstabilitycertificates}.


All these approaches impose identical requirements for grid-following and grid-forming converters. For instance, one of the limitations of the mixed gain\&phase criterion is that, at each frequency, all units are required to satisfy the same type of condition. This uniform requirement is not easy to justify in realistic grids composed of both grid-forming (GFM) and  grid-following (GFL) units. Indeed, GFM converters often exhibit large gains at low frequencies, hence it is more likely that they satisfy small-phase conditions. Conversely, GFL converters may fulfill gain requirements at low frequencies while failing to satisfy appropriate phase requirements. Hence, a fully decentralized condition that ignores the characteristics of each unit can become unnecessarily conservative or inconclusive, even when the actual interconnected system is stable.

This paper addresses this limitation by decoupling the conditions in different sets using limited information on converter technology.
We focus on mixed small-gain and small-phase conditions, as they provide a particularly useful compromise between scalability, complexity and design insight. A converter-wise partitioned mixed gain--phase criterion is proposed where, instead of imposing a uniform gain or phase requirement on all converters over the same frequency range, different local conditions are assigned to different converter subsets. At each frequency, one subset may be certified through a small-gain bound, while the complementary subset is certified through a small-phase bound. These bounds are coupled by a network-dependent quadratic constraint, which characterizes the admissible trade-off between gain margins for one subset and phase margins for the other.

The resulting conditions can be interpreted as technology-aware decentralized stability certificates. They remain decentralized at the converter level, since each converter is still checked through a local admittance condition, without the need for the models of other converters. However, they are less conservative than fully uniform decentralized tests because they exploit the fact that specific converters are connected at specific buses and belong to specific technology classes. 
Still, in the spirit of decentralized conditions, the proposed approach does not only determine whether the system is stable: it also indicates which converters fail their assigned bounds and whether the required corrective action is gain reduction, phase reshaping, or a different allocation of converter-wise requirements.
As such, the developed results can be also useful towards the definition of grid-code requirements for inverter-based resources: the conditions herein presented can be tested locally while ensuring system-level stability.

The contributions of this paper are as follows. First, we leverage the connections between quadratic constraints and small-gain and small-phase criteria to extend existing criteria by introducing a third partitioned option, in which some converters satisfy gain bounds and others satisfy phase bounds at the same frequency. This modular condition is amenable to other decentralized conditions that can be represented by means of quadratic constraints. Second, we define the set of admissible gain and phase bounds generated by the network for a prescribed converter partition, and characterize its main properties. Third, we develop a systematic procedure to select practical gain and phase bounds from this admissible set, avoiding an exhaustive search over all possible choices. Finally, the method is illustrated on heterogeneous systems with both GFM and GFL converters, showing that the proposed converter-wise criterion can certify stability in cases where standard decentralized small-gain and small-phase conditions are inconclusive. Data and code to reproduce the results are available at \cite{Repo}.

%% file: sec2_Stability.tex
\section{Mixed Gain--Phase Stability Conditions}
\label{sec:gain_phase_crit}

For a matrix $A\in \mathbb{C}^{n\times n}$, the gains are its singular values, with $\overline{\sigma}(A)$ and $\underline{\sigma}(A)$ the largest and smallest, respectively. The phases are defined through the numerical range $W(A)=\{ x^*Ax:\,x\in\mathbb{C}^n,\, \|x\|=1\}$, a convex subset of $\mathbb{C}$ containing the eigenvalues of $A$. The matrix $A$ is sectorial if $W(A)$ lies within an angular sector and $0 \notin W(A)$, thus it admits a factorization $A=T^*DT$ with $D$ diagonal, and the phases of $A$ are the arguments of the entries of $D$. Here $\overline{\phi}(A)$ and $\underline{\phi}(A)$ denote the maximum and minimum phases, with $\overline{\phi}(A)-\underline{\phi}(A) < \pi$, and $\Phi(A)=[\underline{\phi}(A),\overline{\phi}(A)]$. Geometrically, these correspond to the angular boundaries of $W(A)$. This definition extends to quasi-sectorial (i.e. $0 \in \partial W(A)$) and to semi-sectorial (i.e. $0 \in \partial W(A)$ and $\overline{\phi}(A)-\underline{\phi}(A) \leq \pi$) matrices \cite{chenPhaseTheoryMIMO2022}.

\subsection{Mixed Gain--Phase Stability Criterion}
\begin{figure}
    \centering
    \begin{subfigure}{0.45\columnwidth}
        \centering
        \includegraphics[width=\linewidth]{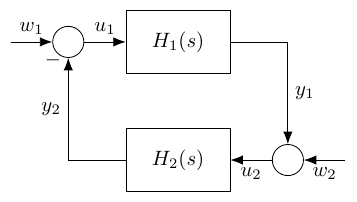}
        \caption{Generic Feedback System}
        \label{fig:fd_diagram}
    \end{subfigure}
    \begin{subfigure}{0.45\columnwidth}
        \centering
        \includegraphics[width=\linewidth]{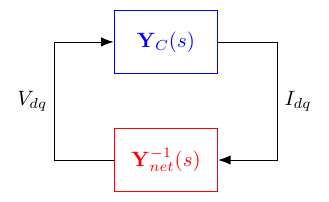}
        \caption{Admittance Feedback}
        \label{fig:fd_diagram_conv_net}
    \end{subfigure}
    \caption{Diagram of the feedback connection.}
    \label{fig:twoside}
\end{figure}
The concepts introduced above provide a framework for assessing the stability of linear time-invariant systems. Consider the feedback interconnection of $H_1(s)$ and $H_2(s)$ shown in Figure \ref{fig:fd_diagram}. 
The following theorem gives a sufficient condition for the stability of the resulting closed-loop system \cite{zhao2022smallgainmeetssmall}.
\begin{thm}[Mixed Small Gain--Phase Theorem]
    Let the open-loop systems $H_1, H_2$ be real, rational, stable, and proper transfer function matrices. Then, the closed-loop system is stable if for each $\omega \in [0, \infty]$, either
    \begin{enumerate}
        \item The small-gain condition holds:
        \begin{equation}
            \overline{\sigma}(H_1(j\omega))\overline{\sigma}(H_2(j\omega))<1
            \label{eq:gen_gain_cond}
        \end{equation}
        \item 
        or $H_1(j\omega)$ is semi-sectorial and $H_2(j\omega)$ is sectorial, and the small-phase condition holds:
        \begin{equation}
            \begin{aligned}
            &\overline{\phi}(H_1(j\omega))+\overline{\phi}(H_2(j\omega))<\pi \\
            &\underline{\phi}(H_1(j\omega))+\underline{\phi}(H_2(j\omega))>-\pi
            \end{aligned}
            \label{eq:gen_phase_cond}
        \end{equation}
    \end{enumerate}
    \label{thm:mixed_small_gain_phase}
\end{thm}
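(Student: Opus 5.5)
Theorem~\ref{thm:mixed_small_gain_phase} is proved by a homotopy argument built on the generalized Nyquist criterion. Because $H_1$ and $H_2$ are stable, the closed loop is stable if and only if $\det(I+H_1(s)H_2(s))$ has no zeros in the closed right half-plane. This holds if and only if the winding number of $\det(I+H_1(j\omega)H_2(j\omega))$ around the origin is zero, given nonsingularity on the extended imaginary axis. To show this, consider the family $\tau H_1$ for $\tau\in[0,1]$. At $\tau=0$ the determinant is identically $1$, so the winding number is zero. Since the winding number is integer-valued and depends continuously on $\tau$, it stays zero provided that
\[
\det\bigl(I+\tau H_1(j\omega)H_2(j\omega)\bigr)\neq 0
\quad\text{for all } \tau\in[0,1],\ \omega\in[0,\infty].
\]
Properness of both systems makes $\omega=\infty$ a legitimate point of the contour, so the whole problem reduces to this pointwise nonsingularity, checked frequency by frequency using whichever condition holds at that $\omega$.

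If the small-gain condition \eqref{eq:gen_gain_cond} holds at $\omega$, then $\overline{\sigma}(\tau H_1H_2)\le\overline{\sigma}(H_1)\overline{\sigma}(H_2)<1$. Hence $-1$ is not an eigenvalue of $\tau H_1H_2$, and $I+\tau H_1H_2$ is invertible. Scaling by $\tau$ keeps the gains small, so this case is immediate.

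If the small-phase condition \eqref{eq:gen_phase_cond} holds, the case $\tau=0$ is trivial, and for $\tau>0$ scaling by a positive factor does not change the phases. So it suffices to show that $-1\notin\lambda(H_1H_2)$. Here I will invoke the phase product lemma from the MIMO phase theory of \cite{chenPhaseTheoryMIMO2022}. If $A$ is semi-sectorial and $B$ is sectorial, then every nonzero eigenvalue of $AB$ satisfies
\[
\angle\lambda(AB)\in\bigl[\underline{\phi}(A)+\underline{\phi}(B),\ \overline{\phi}(A)+\overline{\phi}(B)\bigr].
\]
The proof idea is to write $B=T^*DT$ and note that $AB$ is similar to $TAT^*D$; then $D^{1/2}TAT^*D^{1/2}$ is congruent to $A$ and so has the same phase sector, with phases shifted by those of $D$. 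Under \eqref{eq:gen_phase_cond} this interval lies strictly inside $(-\pi,\pi)$, so no eigenvalue can equal $-1$, whose argument is $\pm\pi$.

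The main obstacle is the semi-sectorial case, where $0\in\partial W(H_1)$ and the factorization of $H_1$ may be degenerate. I would handle it by a small perturbation, replacing $H_1$ with $H_1+\epsilon e^{j\psi}I$ for a direction $\psi$ inside the sector. This makes it sectorial with phases converging as $\epsilon\to0$. Because the phase inequalities are strict, they survive for small $\epsilon$, and taking the limit gives the conclusion; zero eigenvalues are harmless since they do not equal $-1$. The second delicate point is that the condition may switch between gain and phase type as $\omega$ varies. This needs no extra work, because the homotopy argument only requires nonsingularity for every pair $(\tau,\omega)$, not a single uniform condition, and the two cases above cover every frequency.
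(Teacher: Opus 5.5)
Your overall route is correct. The paper gives no proof of its own; it imports the theorem from \cite{zhao2022smallgainmeetssmall}, and your argument is the standard one behind that result. It combines the generalized Nyquist criterion with the single homotopy $\tau H_1$, $\tau\in[0,1]$. This works precisely because both the gain condition and the phase condition are preserved under that one scaling, so the certifying condition may switch type from frequency to frequency. One small completion is needed: the winding number is taken over the whole imaginary axis, so you also need nonsingularity at $-\omega$. This follows from $\det(I+\tau H_1H_2)(-j\omega)=\overline{\det(I+\tau H_1H_2)(j\omega)}$, which holds for real-rational systems and real $\tau$.

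The one step that is wrong as written is your sketch of the phase-product lemma; the lemma itself is true and citable. $D^{1/2}TAT^*D^{1/2}$ is not congruent to $A$. A congruence would be $(T^*D^{1/2})^*A(T^*D^{1/2})=\overline{D^{1/2}}\,TAT^*D^{1/2}$. Moreover, its numerical range need not lie in the shifted sector. Take $0<\psi<\pi/2$, $\cos\psi<b<1$, $M=\begin{bmatrix}1&b\\ b&1\end{bmatrix}\succ 0$ (phases $0$) and $D=\mathrm{diag}(1,e^{2j\psi})$. The vector $x=(1,-1)^{T}/\sqrt{2}$ gives $x^*D^{1/2}MD^{1/2}x=e^{j\psi}(\cos\psi-b)$, whose argument $\psi+\pi$ lies outside $[0,2\psi]$. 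Only the eigenvalues obey the bound.

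A correct short proof runs as follows. Suppose $ABx=\lambda x$ with $\lambda\neq 0$, and set $z=Bx\neq 0$. Then $Az=\lambda B^{-1}z$, so $z^*Az=\lambda\,\overline{x^*Bx}$. Since $B$ is sectorial, $x^*Bx\neq 0$, hence $z^*Az\neq 0$ and $\angle\lambda=\angle(z^*Az)+\angle(x^*Bx)$. Here $z^*Az/\|z\|^2\in W(A)\setminus\{0\}$ lies in the closed sector of $A$, and $x^*Bx/\|x\|^2$ lies in that of $B$.

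Because only $B$ is inverted, this argument covers semi-sectorial $A$ directly, so your perturbation $H_1+\epsilon e^{j\psi}I$ is unnecessary. If you keep it, be precise about what passes to the limit $\epsilon\to 0$. It is the closed angle bound on nonzero eigenvalues, via continuity of eigenvalues. It is not the nonsingularity of $I+\tau H_1H_2$, which is an open condition and is not preserved under limits.
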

This theorem allows the stability of the closed-loop system to be assessed by verifying, at each frequency, either a gain or a phase condition on $H_1$ and $H_2$ separately. Both conditions are decoupled, in the sense that they are expressed in terms of the individual gains and phases of $H_1$ and $H_2$, without requiring the explicit computation of the closed-loop transfer function.

\color{black}
\subsection{Stability Conditions as Quadratic Constraints}
Theorem \ref{thm:mixed_small_gain_phase} can be restated in a more generic form in terms of Quadratic Constraints (QC) as follows \cite{ringh2025gain}.

\begin{thm} \label{thm:IQC_theorem}
    Let the open-loop systems $H_1$ and $H_2$ be real, rational, stable, and proper transfer function matrices. The closed-loop system is stable if there exists a Hermitian matrix 
    \begin{equation}
        \Pi(\omega) = \begin{bmatrix}
            \Pi_{11}(\omega) & \Pi_{12}(\omega)\\
            \Pi_{12}^*(\omega) & \Pi_{22}(\omega)
        \end{bmatrix}
    \end{equation}
    such that for all $\omega \in [0, \infty]$, $\Pi_{11}(\omega) \preceq 0$, $\Pi_{22}(\omega) \succeq 0$, and the following quadratic constraints are satisfied:
    \begin{subequations} \label{eq:qc}
        \begin{align}
            \begin{bmatrix}
                H_1(j\omega)\\
                I
            \end{bmatrix}^*
            \Pi(\omega)
            \begin{bmatrix}
                H_1(j\omega)\\
                I
            \end{bmatrix}
            &\succeq 0 \label{eq:qc_1} \\
            \begin{bmatrix}
                I \\ -H_2(j\omega)
            \end{bmatrix}^*
            \Pi(\omega)
            \begin{bmatrix}
                I\\
                -H_2(j\omega)
            \end{bmatrix}
            &\prec 0 \label{eq:qc_2}
        \end{align}
    \end{subequations}
\end{thm}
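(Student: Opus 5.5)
The plan is the standard homotopy argument combined with the generalized Nyquist criterion. Introduce the family of loops $\tau H_2$ for $\tau\in[0,1]$. At $\tau=0$ the interconnection is trivially stable, because $H_1$ and $H_2$ are stable. Instability can only appear along the homotopy if, for some $\tau\in(0,1]$ and some $\omega\in[0,\infty]$, the return difference $I+\tau H_2(j\omega)H_1(j\omega)$ becomes singular. This holds since the determinant depends continuously on $\tau$ and $\omega$ and the encirclement count of the Nyquist plot of $\det(I+\tau H_2H_1)$ can only change when the curve passes through the origin. Properness ensures the point $\omega=\infty$ is included and the curve is closed. It therefore suffices to show that
\[
\det\bigl(I+\tau H_2(j\omega)H_1(j\omega)\bigr)\neq 0
\]
for all $\tau\in[0,1]$ and all $\omega$.

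\textbf{Step 1: the quadratic constraints survive the scaling.} First check that the constraints still hold when $H_2$ is replaced by $\tau H_2$. Expanding \eqref{eq:qc_2} with $\tau H_2$ gives
\[
\Pi_{11}-\tau(\Pi_{12}H_2+H_2^*\Pi_{12}^*)+\tau^2H_2^*\Pi_{22}H_2 .
\]
This expression is concave in $\tau$ because $\Pi_{22}\succeq 0$. At $\tau=1$ it is $\prec 0$ by \eqref{eq:qc_2}, and at $\tau=0$ it equals $\Pi_{11}\preceq 0$. By concavity it is $\preceq 0$ on the whole interval, and strictly negative except possibly at $\tau=0$, which is harmless. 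This is where the sign conditions on $\Pi_{11}$ and $\Pi_{22}$ are used.

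\textbf{Step 2: the pointwise contradiction.} Suppose $(I+\tau H_2H_1)v=0$ with $v\neq 0$. Set $w=H_1v$, so that $v=-\tau H_2 w$. Then
\[
\begin{bmatrix}H_1v\\ v\end{bmatrix}=\begin{bmatrix}w\\ -\tau H_2 w\end{bmatrix}.
\]
Note that $w\neq 0$, since $w=0$ would force $v=0$. Evaluate the quadratic form $z^*\Pi z$ at this vector:
\begin{itemize}
\item By \eqref{eq:qc_1} applied to $v$, it is $\geq 0$.
\item By the scaled version of \eqref{eq:qc_2} applied to $w$, it is $<0$ for $\tau>0$.
\end{itemize}
This is a contradiction. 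The case $\tau=0$ is immediate, since the return difference is then the identity.

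\textbf{Main obstacle.} I expect the main difficulty to be the rigorous continuity step. One must argue that the winding number is constant along $\tau$ because no zero crosses the contour, including at $\omega=\infty$ where properness is needed. One must also check the case where strictness of \eqref{eq:qc_2} is required uniformly. For a self-contained argument I would cite the standard IQC result of Megretski--Rantzer, which has exactly this homotopy structure, or \cite{ringh2025gain}. I would only verify here the frequency-wise, non-singularity part given in Steps 1 and 2.
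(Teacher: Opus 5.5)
Your argument is correct. The paper does not prove this theorem; it imports it from \cite{ringh2025gain}, so there is no competing proof in the text, and your homotopy-plus-Nyquist argument is the standard route for results of this kind.

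One slip in Step~1 must be fixed. Because $\Pi_{22}\succeq 0$, the $\tau^2$ coefficient $H_2^*\Pi_{22}H_2$ is positive semidefinite. So $f(\tau)=\Pi_{11}-\tau(\Pi_{12}H_2+H_2^*\Pi_{12}^*)+\tau^2H_2^*\Pi_{22}H_2$ is \emph{convex} in $\tau$, not concave. Convexity is what gives the chord bound $x^*f(\tau)x\le(1-\tau)\,x^*\Pi_{11}x+\tau\,x^*f(1)x<0$ for $\tau\in(0,1]$ and $x\neq 0$. Read literally, ``by concavity'' would give the opposite inequality. You could equally scale $H_1$ instead: there $\Pi_{11}\preceq 0$ makes the scaled form of \eqref{eq:qc_1} concave, and $\Pi_{22}\succeq 0$ covers $\tau=0$.

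Two remarks tidy up the Nyquist step.
\begin{itemize}
\item The hypotheses only cover $\omega\ge 0$. The negative-frequency half of the contour follows from realness of $H_1,H_2$, since $\det(I+\tau H_2H_1)(-j\omega)$ is the complex conjugate of its value at $j\omega$.
\item The ``uniform strictness'' you flag is not an issue for LTI systems. Pointwise nonvanishing of the jointly continuous map $(\tau,\omega)\mapsto\det(I+\tau H_2(j\omega)H_1(j\omega))$ on a compact parameter set already fixes the winding number.
\end{itemize}
In particular, no regularity of $\Pi$ in $\omega$ is needed. This is what the proof of Theorem~\ref{thm:ext_mixed_small_gain_phase} relies on when it switches multiplier structures from one frequency to another.
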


The original gain and phase conditions in Theorem~\ref{thm:mixed_small_gain_phase} can be recovered directly from Theorem~\ref{thm:IQC_theorem} by selecting specific structural forms for the multiplier $\Pi(\omega)$~\cite{ringh2025gain}:
\begin{itemize}
    \item Gain Condition: To recover \eqref{eq:gen_gain_cond}, the multiplier matrix is selected as
    \begin{equation} \label{eq:gain_pi}
        \Pi(\omega) = \begin{bmatrix}
            -\gamma^2(\omega)I & 0\\
            0 & I
        \end{bmatrix},
    \end{equation}
    where $\gamma(\omega)> 0$ serves as a slack variable parameterizing the frequency-dependent gain bound.
    
    \item Phase Condition: To recover \eqref{eq:gen_phase_cond}, the multiplier matrix is structured as
    \begin{equation} \label{eq:phase_pi}
        \Pi(\omega) = \begin{bmatrix} 
            0 & z(\omega)I\\
            z^*(\omega)I & 0
        \end{bmatrix},
    \end{equation}
    where $z(\omega) \in \mathbb{C}$ is a complex slack variable encoding the frequency-dependent phase bounds.
\end{itemize}
\medskip
While the structural choices in \eqref{eq:gain_pi} and \eqref{eq:phase_pi} map directly to gain and phase bounds respectively, they represent only a particular choice of the valid multipliers. In reality, $\Pi(\omega)$ offers a broader and more flexible search space. This degree of freedom will be exploited in Section \ref{sec:4_ConverterWise} to provide the partitioned stability conditions.

%% file: sec3_DecConditions.tex
\section{Decentralized Stability Criteria for Power Grid Interconnections}
\begin{figure}
    \centering
    \includegraphics[width=\linewidth]{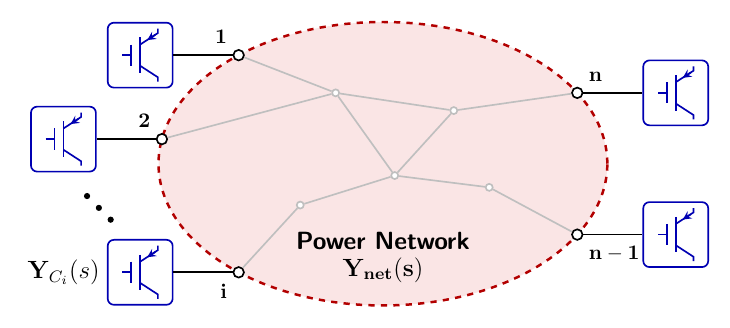}
    \caption{Schematic representation of a multi-converter power system}
    \label{fig:multi_converter_diagram}
\end{figure}

This paper focuses on evaluating the small-signal stability of an interconnected power grid network comprising $n$ grid-connected power electronic converters, as depicted in Figure \ref{fig:multi_converter_diagram}. 
We define the collective converter dynamics using a block-diagonal admittance matrix:
\begin{equation}
    \mathbf{Y}_C(s)=\text{diag}(Y_{C_1}(s),Y_{C_2}(s),\cdots,Y_{C_n}(s))
    \label{eq:block_diagonal}
\end{equation}
where $Y_{C_i}(s)$ captures the individual admittance characteristic of the $i$-th converter. The transmission grid is modeled via the network admittance matrix $\mathbf{Y}_{net}(s)$. We utilize a Kron-reduction technique on $\mathbf{Y}_{net}(s)$ to eliminate intermediate internal nodes. All transfer functions and admittances are derived with respect to a unified, global $dq$ reference frame. Paralleling the structure of Figure \ref{fig:fd_diagram}, this integrated multi-converter network can be modeled as a closed-loop feedback system, as illustrated in Figure \ref{fig:fd_diagram_conv_net}.

Throughout this paper, we employ the frequency-dependent frame transformation introduced in~\cite{cifelli2025decentralized}. The motivation is that converter admittances exhibit distinct behavior at low and high frequencies. At high frequencies, the dynamics are dominated by the output filter and the inner current and voltage control loops, which typically exhibit favorable phase properties and can be analyzed directly in the original admittance representation. At low frequencies, instead, the dynamics are governed by the outer P-f and Q-V control loops, which generally do not satisfy the sectoriality conditions required for applying phase-based criteria. These properties are, however, typically recovered in the Voltage-Angle to P-Q frame. The transformation therefore provides a smooth interpolation between these two representations, enabling consistent analysis across the frequency range, while an additional weighting function can be used to reduce conservativeness.

Accordingly, the transformed converter and network admittances are defined as
\begin{equation}
    J_{C_i} = (\mathcal{E}_i(s)Y_{C_i}(s)+\mathcal{C}_i(s))\mathcal{F}_i(s)
    \label{eq:conv_frame_transf}
\end{equation}
\begin{equation}
    \mathbf{J}_{net} = (\boldsymbol{\mathcal{E}}(s)\mathbf{Y}_{net}(s)-\boldsymbol{\mathcal{C}}(s))\boldsymbol{\mathcal{F}}(s)
    \label{eq:net_frame_transf}
\end{equation}
where the frequency-dependent transformation matrices are given by
\begin{equation}
\begin{aligned}
    &\mathcal{E}_i(s) = \mathcal{E}_{J_i} \\
    &\mathcal{C}_i(s) = H_{\mathrm{LPF}}(s)\mathcal{C}_{J_i}\\
    &\mathcal{F}_i(s) = H_{\mathrm{LPF}}(s)\mathcal{F}_{J_i}+H_{\mathrm{HPF}}(s)W_i(s)\mathcal{E}_{J_i}^{-1}
\end{aligned}
\label{eq:freq_dep_tf_better}
\end{equation}
The matrices $\mathcal{E}_{J_i}$, $\mathcal{F}_{J_i}$, and $\mathcal{C}_{J_i}$ depend on the operating point and are reported in~\cite{cifelli2025decentralized}. Here, $H_{\mathrm{LPF}}(s)$ and $H_{\mathrm{HPF}}(s)$ denote complementary low and high-pass filters, respectively, while $W_i(s)$ is an optional weighting function that adds flexibility and reduces conservativeness. For instance, in~\cite{cifelli2025decentralized} it is chosen as the inverse virtual admittance, whereas in~\cite{huangGainPhaseDecentralized2024} it is taken as the average network line impedance.

\color{black}
We are now ready to apply Theorem \ref{thm:mixed_small_gain_phase} to the system in Figure \ref{fig:fd_diagram_conv_net}, considering the transformed systems $J_{C_i}$ and $\mathbf{J}_{net}$. As shown in~\cite{cifelli2025decentralized}, the transformation \eqref{eq:conv_frame_transf}--\eqref{eq:net_frame_transf} preserves closed-loop stability, so that stability of the transformed feedback interconnection implies stability of the original multi-converter system.
The block-diagonal structure of $\mathbf{Y}_C(s)$, which is preserved under the transformation, allows a decentralized stability criterion, leading to the following result \cite{huangGainPhaseDecentralized2024}. 

\begin{thm}[Decentralized Mixed Small Gain-Phase Theorem]
    The multi-converter system in Figure \ref{fig:fd_diagram_conv_net} is stable if the $J_{C_i}$ and $\mathbf{J}_{net}^{-1}$ are stable, and for each $\omega \in [0,\infty]$, either
    \begin{enumerate}
        \item 
        the decentralized gain condition, i.e.,
        \begin{equation}
            \overline{\sigma}(J_{C_i}(j\omega))<\underline{\sigma}(\mathbf{J}_{net}(j\omega)) \quad \forall i
            \label{eq:uniform_gain_cond}
        \end{equation}
        holds, or
        \item 
        the decentralized phase condition, i.e.,
        every $J_{C_i}(j\omega)$ and $\mathbf{J}^{-1}_{net}(j\omega)$ are sectorial, and
        \begin{subequations}
        \begin{equation}
            \overline{\phi}(J_{C_i}(j\omega))<\pi-\overline{\phi}(\mathbf{J}_{net}^{-1}(j\omega)) \quad \forall i
        \end{equation}
        \begin{equation}
            \underline{\phi}(J_{C_i}(j\omega))>-\pi-\underline{\phi}(\mathbf{J}_{net}^{-1}(j\omega))\quad \forall i
        \end{equation}
        \begin{equation}
            \max_i\overline{\phi}(J_{C_i}(j\omega)) -\min_i\underline{\phi}(J_{C_i}(j\omega))<\pi
        \end{equation}
        \label{eq:uniform_phase_cond}
        \end{subequations}
    \end{enumerate}
    \label{thm:dec_mixed_small_gain_phase}
\end{thm}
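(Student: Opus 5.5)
The plan is to obtain Theorem~\ref{thm:dec_mixed_small_gain_phase} from Theorem~\ref{thm:mixed_small_gain_phase}, applied frequency by frequency with $H_1=\mathbf{J}_C:=\mathrm{diag}(J_{C_1},\dots,J_{C_n})$ and $H_2=\mathbf{J}_{net}^{-1}$. The feedback loop of Figure~\ref{fig:fd_diagram_conv_net}, after the transformation \eqref{eq:conv_frame_transf}--\eqref{eq:net_frame_transf}, is exactly an interconnection of these two operators. Stability of the transformed loop implies stability of the original multi-converter system by \cite{cifelli2025decentralized}. The hypotheses that $J_{C_i}$ and $\mathbf{J}_{net}^{-1}$ are stable supply the standing assumptions of Theorem~\ref{thm:mixed_small_gain_phase}; realness, rationality and properness are inherited from the admittance models and filters. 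It then suffices to show that, at each $\omega$, either condition 1) or condition 2) of the statement implies the corresponding condition \eqref{eq:gen_gain_cond} or \eqref{eq:gen_phase_cond}.

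\emph{Gain case.} I would use two facts. The block-diagonal structure gives $\overline{\sigma}(\mathbf{J}_C)=\max_i\overline{\sigma}(J_{C_i})$. Inversion gives $\overline{\sigma}(\mathbf{J}_{net}^{-1})=1/\underline{\sigma}(\mathbf{J}_{net})$. Then \eqref{eq:uniform_gain_cond} for all $i$ is exactly $\overline{\sigma}(\mathbf{J}_C)\,\overline{\sigma}(\mathbf{J}_{net}^{-1})<1$, which is \eqref{eq:gen_gain_cond}.

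\emph{Phase case.} The key lemma is that the numerical range of a block-diagonal matrix is the convex hull of the numerical ranges of its blocks. Writing $x=(x_1,\dots,x_n)$ with $\|x\|=1$, we have $x^*\mathbf{J}_Cx=\sum_i\|x_i\|^2\,\hat x_i^*J_{C_i}\hat x_i$, a convex combination of points of $W(J_{C_i})$.
\begin{itemize}
\item Each $W(J_{C_i})$ lies in the closed sector $\{\arg z\in[\underline{\phi}(J_{C_i}),\overline{\phi}(J_{C_i})]\}$ and avoids $0$.
\item By the third inequality of \eqref{eq:uniform_phase_cond}, all of these sectors sit inside a single sector $[\min_i\underline{\phi},\max_i\overline{\phi}]$ of opening less than $\pi$.
\item A sector of opening less than $\pi$ is convex. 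A convex combination of nonzero points in it is nonzero, since the points lie in an open half-plane through the origin.
\end{itemize}
Hence $0\notin W(\mathbf{J}_C)$, $\mathbf{J}_C$ is sectorial, and
\[
\overline{\phi}(\mathbf{J}_C)=\max_i\overline{\phi}(J_{C_i}),\qquad \underline{\phi}(\mathbf{J}_C)=\min_i\underline{\phi}(J_{C_i}).
\]
For these equalities the extreme phases are attained on the boundary rays of the hull; phases here are read as the angular boundaries of $W$, as the paper states. The first two inequalities of \eqref{eq:uniform_phase_cond}, taken at the maximizing and minimizing indices, then become exactly \eqref{eq:gen_phase_cond} with $H_1=\mathbf{J}_C$ and $H_2=\mathbf{J}_{net}^{-1}$. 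Sectoriality of $\mathbf{J}_{net}^{-1}$ is assumed, and sectorial implies semi-sectorial, so every hypothesis of condition 2) in Theorem~\ref{thm:mixed_small_gain_phase} holds.

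The main obstacle is the phase case: proving sectoriality of $\mathbf{J}_C$ and identifying its phases with the extreme block phases. This is where the third condition in \eqref{eq:uniform_phase_cond} is essential. Without it the hull could wrap around the origin. A subtle point is consistency of the branch of $\arg$: all block phases must be measured in a common interval of length less than $\pi$. I would fix this by choosing the branch from the common sector guaranteed by the third inequality. The remaining steps, and the invariance of stability under the frame transformation, are inherited from \cite{cifelli2025decentralized,huangGainPhaseDecentralized2024}.
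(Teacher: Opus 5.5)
Your proof is correct, but it takes a different route from the one the paper relies on. The paper does not prove Theorem~\ref{thm:dec_mixed_small_gain_phase} itself; it imports it from \cite{huangGainPhaseDecentralized2024}. The only place where conditions 1) and 2) are argued is the proof of Theorem~\ref{thm:ext_mixed_small_gain_phase}. That proof goes through the quadratic-constraint Theorem~\ref{thm:IQC_theorem} with the multipliers \eqref{eq:gain_pi} and \eqref{eq:phase_pi}, not through Theorem~\ref{thm:mixed_small_gain_phase}. There, decentralization does not come from computing gains and phases of the aggregate $\mathbf{J}_C$. It comes from the converter-side constraint splitting block by block:
with \eqref{eq:gain_pi} and $\gamma=\underline{\sigma}(\mathbf{J}_{net})$ it reads $\mathbf{J}_C^*\mathbf{J}_C\prec\gamma^2 I$;
with \eqref{eq:phase_pi} and $z=|z|e^{j\theta}$ it reads $e^{j\theta}J_{C_i}+e^{-j\theta}J_{C_i}^*\succ 0$ for every $i$, while the network side requires $e^{-j\theta}\mathbf{J}_{net}^{-1}+e^{j\theta}(\mathbf{J}_{net}^{-1})^*\succeq 0$.

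In that picture, the third inequality of \eqref{eq:uniform_phase_cond} is exactly the requirement that a single rotation $\theta$ serve all converters. This is the nonempty interval $(L,U)$ in the proof of Proposition~\ref{prop:conditions_with_bounds}. The first two inequalities are what make $(L,U)$ meet the network's admissible $\theta$-interval.

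Your convex-hull lemma $W(\mathbf{J}_C)=\mathrm{conv}\bigcup_i W(J_{C_i})$ does the same job geometrically. It also yields slightly more, namely that the aggregate phases equal the extreme block phases, and it uses only Theorem~\ref{thm:mixed_small_gain_phase}. The multiplier route, in exchange, makes the certificate itself branch-free, since only a rotation $e^{j\theta}$ enters. More importantly, it is the route that generalizes to condition 3) of Theorem~\ref{thm:ext_mixed_small_gain_phase}. There, a mixed gain/phase multiplier cannot be rephrased as a single gain condition or a single phase condition on $\mathbf{J}_C$, so your aggregate argument would not carry over.

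On your branch concern, no actual choice is needed. The third inequality is assumed to hold for the given numerical phase values, so it already places all block intervals inside one interval of length less than $\pi$. All that remains is to express the phases of $\mathbf{J}_C$ in that same branch.
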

This decentralized formulation allows the stability of a large-scale multi-converter system to be certified by verifying local conditions on each converter model together with the network model, without requiring the explicit computation of the full interconnected system’s dynamics.

The practical applicability of Theorem \ref{thm:dec_mixed_small_gain_phase} is inherently limited by the requirement that all converters in a system must satisfy the same condition, either gain or phase.
To illustrate this limitation, consider the two-converter system in Figure \ref{fig:2ConvSys}, comprising a GFL and a GFM converter connected to an infinite bus. Applying Theorem \ref{thm:dec_mixed_small_gain_phase} to this configuration yields the results shown in Figure \ref{fig:GFM_gain_phase_test}.
At low frequencies, these two specific converters exhibit complementary characteristics that expose the theorem's inflexibility. The GFM converter features high gain (violating the gain condition) but favorable phase properties (satisfying the phase condition). Conversely, the GFL converter features low gain (satisfying the gain condition) but poor phase properties (violating the phase condition).
Because Theorem \ref{thm:dec_mixed_small_gain_phase} enforces a uniform constraint across the entire system, it cannot evaluate the GFL via the gain condition and the GFM via the phase condition simultaneously. 

To resolve this, this paper extends Theorem \ref{thm:dec_mixed_small_gain_phase} to allow different converters to satisfy different conditions at each frequency, forming the main contribution detailed in the next sections.

\color{black}

\begin{figure}
    \centering
    \includegraphics[width=\linewidth]{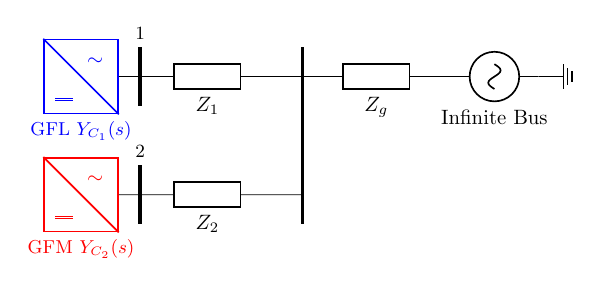}
    \caption{Two-Converter System}
    \label{fig:2ConvSys}
\end{figure}

\begin{figure}
    \centering
    \begin{subfigure}{\columnwidth}
        \centering
        \includegraphics[width=\linewidth]{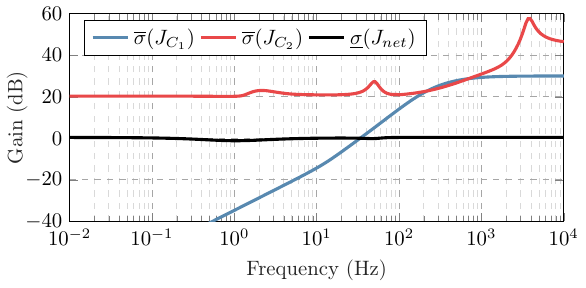}
    \end{subfigure}
    \begin{subfigure}{\columnwidth}
        \centering
        \includegraphics[width=\linewidth]{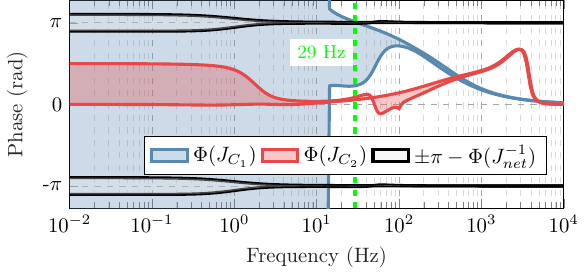}
    \end{subfigure}
    \caption{Mixed small gain–phase condition for a two-converter system. If a system is not sectorial, its phase is represented by the interval $[-2\pi,\, 2\pi]$. The small-gain condition is violated if the converter gain exceeds the network gain. The small-phase condition is violated if the converter phase overlaps the network phase shifted by $\pm\pi$.}
    \label{fig:GFM_gain_phase_test}
\end{figure}

%% file: sec4_ConverterWise.tex
\section{Partitioned Mixed Gain--Phase Stability Conditions}
\label{sec:4_ConverterWise}
As mentioned previously, it would be beneficial to have a more flexible decentralized criterion that allows different converters to satisfy different conditions at each frequency. In this regard, consider partitioning the set of $n$ converters into two subsets $\mathcal{I}_g$ of $m$ converters and $\mathcal{I}_\phi$ of $l=n-m$ converters. Without loss of generality, we assume the system indices are reordered such that the first $m$ ($1$ to $m$) buses and their corresponding converters belong to $\mathcal{I}_g$, while the remaining $l$ ($m+1$ to $n$) belong to $\mathcal{I}_\phi$.
Then, we can extend Theorem \ref{thm:dec_mixed_small_gain_phase} as follows.
\begin{thm}[Partitioned Mixed Small Gain--Phase Theorem]
    The multi-converter system in Figure \ref{fig:fd_diagram_conv_net} is stable if $J_{C_i}$ and $\mathbf{J}_{net}^{-1}$ are stable, and for each $\omega \in [0,\infty]$, either
    \begin{enumerate}
        \item 
        the decentralized gain condition \eqref{eq:uniform_gain_cond} holds, or
        \item 
        the decentralized phase condition \eqref{eq:uniform_phase_cond} holds, or
        \item each $J_{C_i}(j\omega)$ is sectorial and the partitioned decentralized gain--phase condition holds, i.e.,
        \begin{subequations}
        \begin{equation}
            \overline{\sigma}(J_{C_i}(j\omega))<\gamma(\omega) \quad \forall i \in \mathcal{I}_g
            \label{eq:gain_condition_gain_converters}
        \end{equation}
        \begin{equation}
            \overline{\phi}(J_{C_i}(j\omega))<\frac{\pi}{2}-\theta(\omega) \quad \forall i \in \mathcal{I}_\phi
            \label{eq:phase_upper_bound}
        \end{equation}
        \begin{equation}
            \underline{\phi}(J_{C_i}(j\omega))>-\frac{\pi}{2}-\theta(\omega)\quad \forall i \in \mathcal{I}_\phi
            \label{eq:phase_lower_bound}
        \end{equation}
        \label{eq:ext_thm_conds}
        \end{subequations}
        where $\gamma(\omega)>0$ and $\theta(\omega) \in [-\pi,\pi]$ are a solution of the following quadratic constraint:
        \begin{equation}
            \begin{bmatrix}
                \mathbf{J}_{net}^{-1}(j\omega) \\ I_n
            \end{bmatrix}^*
            \Pi(\omega)
            \begin{bmatrix}
                \mathbf{J}_{net}^{-1}(j\omega) \\ I_n
            \end{bmatrix}
            \succeq 0
            \label{eq:Jnet_QC}
        \end{equation}
        with
        \begin{equation}
            \Pi(\omega) = \begin{bmatrix}
                -\gamma I_m & 0 & 0 & 0 \\
                0 & 0 & 0 & \kappa e^{j\theta}I_l \\
                0 & 0 & \gamma^{-1}I_m & 0 \\
                0 & \kappa e^{-j\theta}I_l & 0 & 0
            \end{bmatrix}
            \label{eq:Pi_phase_gain}
        \end{equation}
        and $\kappa(\omega)>0$.
    \end{enumerate}  
    \label{thm:ext_mixed_small_gain_phase}
\end{thm}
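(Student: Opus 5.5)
Cases 1) and 2) are exactly Theorem~\ref{thm:dec_mixed_small_gain_phase}. For each $\omega$ where case 3) is used, we invoke Theorem~\ref{thm:IQC_theorem} with $H_1=\mathbf{J}_{net}^{-1}$ and $H_2=\mathbf{J}_C=\mathrm{diag}(J_{C_i})$, taking the multiplier \eqref{eq:Pi_phase_gain}. Since $\mathbf{J}_{net}^{-1}$ and $J_{C_i}$ are stable, the hypotheses of Theorem~\ref{thm:IQC_theorem} hold. The frequencies must be glued together: the gain multiplier \eqref{eq:gain_pi} or phase multiplier \eqref{eq:phase_pi} is used where 1) or 2) hold, and \eqref{eq:Pi_phase_gain} elsewhere. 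I would rely on the fact that Theorem~\ref{thm:IQC_theorem}, in the form of \cite{ringh2025gain}, allows $\Pi(\omega)$ to be chosen frequency-wise, exactly as Theorem~\ref{thm:mixed_small_gain_phase} switches conditions per frequency. Some care at switching frequencies may be needed, for example a small perturbation or an appeal to the homotopy argument behind \cite{ringh2025gain}.

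First check the sign conditions. Ordered as $(\mathcal{I}_g$ outputs, $\mathcal{I}_\phi$ outputs, $\mathcal{I}_g$ inputs, $\mathcal{I}_\phi$ inputs$)$, we have $\Pi_{11}=\mathrm{diag}(-\gamma I_m,0)\preceq 0$ and $\Pi_{22}=\mathrm{diag}(\gamma^{-1}I_m,0)\succeq 0$. Condition \eqref{eq:qc_1} is exactly the network constraint \eqref{eq:Jnet_QC}, assumed to hold.

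The remaining step is \eqref{eq:qc_2} for $H_2=\mathbf{J}_C$. Since $\mathbf{J}_C$ is block diagonal, $[I;-\mathbf{J}_C]^*\Pi[I;-\mathbf{J}_C]$ decouples:
\begin{equation*}
\mathrm{diag}\Big(-\gamma I+\gamma^{-1}\mathbf{J}_C^{g*}\mathbf{J}_C^{g},\;-\kappa\big(e^{j\theta}\mathbf{J}_C^{\phi}+e^{-j\theta}\mathbf{J}_C^{\phi*}\big)\Big).
\end{equation*}
The cross terms vanish because $\Pi_{12}$ couples only $\phi$ outputs with $\phi$ inputs. The first block is $\prec 0$ iff $\overline{\sigma}(J_{C_i})<\gamma$ for all $i\in\mathcal{I}_g$, which is \eqref{eq:gain_condition_gain_converters}. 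The second block is $\prec 0$ iff $\mathrm{Re}(e^{j\theta}x^*J_{C_i}x)>0$ for all unit $x$ and all $i\in\mathcal{I}_\phi$, i.e., the rotated numerical range $e^{j\theta}W(J_{C_i})$ lies in the open right half-plane. Since $J_{C_i}$ is sectorial, $W(J_{C_i})$ is contained in the cone $\{\arg\in[\underline{\phi},\overline{\phi}]\}$ with $0\notin W(J_{C_i})$. So the condition is equivalent to $\underline{\phi}+\theta>-\pi/2$ and $\overline{\phi}+\theta<\pi/2$, which are \eqref{eq:phase_upper_bound}--\eqref{eq:phase_lower_bound}. Hence \eqref{eq:qc_2} holds strictly, and Theorem~\ref{thm:IQC_theorem} gives stability.

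The main obstacle is the phase-block equivalence, which needs the angular boundaries of the compact set $W(J_{C_i})$ to be attained so that strict inequalities carry over. The paper's definition of the phases via angular boundaries gives this, and I would cite \cite{chenPhaseTheoryMIMO2022} for it. A second point to handle is the frequency-wise multiplier gluing noted above.
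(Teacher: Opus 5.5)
Your proposal is correct and follows the paper's proof: both apply Theorem~\ref{thm:IQC_theorem} with $H_1=\mathbf{J}_{net}^{-1}$ and $H_2=\mathbf{J}_C$, observe that \eqref{eq:qc_1} is exactly \eqref{eq:Jnet_QC}, and decouple \eqref{eq:qc_2} into a gain block (your $-\gamma I+\gamma^{-1}\mathbf{J}_C^{g*}\mathbf{J}_C^{g}$ is the paper's $-\gamma^2 I+\mathbf{J}_C^{g*}\mathbf{J}_C^{g}$ up to a positive factor $\gamma$) and a rotated-numerical-range phase block; your explicit check of $\Pi_{11}\preceq 0$, $\Pi_{22}\succeq 0$ is left implicit in the paper. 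The switching-frequency caveat is unnecessary, because Theorem~\ref{thm:IQC_theorem} as stated imposes no continuity on $\Pi(\omega)$, and the paper likewise simply chooses the multiplier pointwise in $\omega$.
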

\begin{proof}
    The proof is based on the application of Theorem \ref{thm:IQC_theorem} using a multiplier associated with each condition. Let $H_1(j\omega)=\mathbf{J}_{net}^{-1}(j\omega)$ and $H_2(j\omega)=\mathbf{J}_{C}(j\omega)$.
    Should condition 1 be satisfied, there exists a $\Pi(\omega)$ of the form in \eqref{eq:gain_pi} such that the QCs in \eqref{eq:qc} are satisfied.
    Should condition 2 be satisfied, there exists a $\Pi(\omega)$ of the form in \eqref{eq:phase_pi} such that the QCs in \eqref{eq:qc} are satisfied.
    In the case of condition 3, the QC related to the network \eqref{eq:qc_1} is already satisfied  by \eqref{eq:Jnet_QC}. 
    It remains to be shown that if condition 3 holds, then $\Pi$ satisfies the QCs related to the converters in \eqref{eq:qc_2}, i.e.:
    \begin{equation*}
            \begin{bmatrix}
                I \\ -\mathbf{J}_{C}(j\omega)
            \end{bmatrix}^*
            \Pi(\omega)
            \begin{bmatrix}
                I \\ -\mathbf{J}_{C}(j\omega)
            \end{bmatrix}
            \prec 0
        \end{equation*}
    Considering $\mathbf{J}_{C}$ partitioned as $\text{diag}(\mathbf{J}^g_{C},\mathbf{J}^{\phi}_{C})$, where $\mathbf{J}^g_{C}$ contains the ${J}_{C_i}$ for $i \in \mathcal{I}_g$ and $\mathbf{J}^{\phi}_{C}$ contains the ${J}_{C_i}$ for $i \in \mathcal{I}_\phi$, we can rewrite the above matrix inequality as two separate inequalities as follows:
    \begin{equation}
            -\gamma^2I + (\mathbf{J}^g_{C})^*\mathbf{J}^g_{C} \prec 0 \\
            \label{eq:proof_dec_gain}
    \end{equation} 
    \begin{equation}
            -\kappa e^{j\theta}\mathbf{J}^{\phi}_{C}-\kappa e^{-j\theta}(\mathbf{J}^{\phi}_{C})^* \prec 0
            \label{eq:proof_dec_phase}
    \end{equation} 
    Since $\mathbf{J}^g_{C}$ is block diagonal, inequality \eqref{eq:proof_dec_gain} is guaranteed to be satisfied by \eqref{eq:gain_condition_gain_converters}. While inequality \eqref{eq:proof_dec_phase} asserts that rotating the numerical range of $\mathbf{J}^{\phi}_{C}$  by $\theta$ places it in the positive real half plane, or equivalently:
    \begin{equation*}
        \overline{\phi}(\mathbf{J}^{\phi}_{C})+\theta<\frac{\pi}{2}, \quad 
        \underline{\phi}(\mathbf{J}^{\phi}_{C})+\theta>-\frac{\pi}{2}
    \end{equation*}
    Again, by block diagonality of $\mathbf{J}^{\phi}_{C}$, these conditions are satisfied by \eqref{eq:phase_upper_bound}-\eqref{eq:phase_lower_bound}.
    Therefore, for each $\omega$, if one of the three conditions is satisfied, then there exists a $\Pi(\omega)$ such that the QCs in \eqref{eq:qc} are satisfied. Then, by Theorem \ref{thm:IQC_theorem}, the closed loop system is stable.
\end{proof}
The extension lies in the third condition, where stability is guaranteed if a chosen subset of converters satisfies a gain condition, while the complementary subset satisfies a phase condition. Notably, the conditions imposed on the individual converters retain exactly the same form as the gain and phase conditions of the uniform theorem; only the bounds differ. 
The key element enabling this partitioning lies in the structure of the multiplier \eqref{eq:Pi_phase_gain}: by assigning a gain-type block to the converters in $\mathcal{I}_g$ and a phase-type block to those in $\mathcal{I}_\phi$, a single multiplier simultaneously encodes both criteria, allowing the two subsets to be treated under different conditions.

In general, inequality \eqref{eq:Jnet_QC} defines the feasible space for the parameters $\gamma$, $\theta$, and $\kappa$.
While the parameter $\kappa$ does not play a direct role in the converters' conditions \eqref{eq:ext_thm_conds}, it shapes the set of valid pairs $(\gamma, \theta)$ that satisfy \eqref{eq:Jnet_QC}. 
Given a specific $\omega$ and $\kappa$, we define the set of pairs $(\gamma, \theta)$ that solve \eqref{eq:Jnet_QC} as:
\begin{equation}
\mathcal{L}_{\kappa}(\omega) = \{(\gamma,\theta) : (\gamma,\kappa,\theta) \text{ satisfies } \eqref{eq:Jnet_QC} \}
\label{eq:set_L_kappa}
\end{equation}


While an infinite number of gain and phase combinations exist, the conditions in \eqref{eq:ext_thm_conds} only need to be satisfied by a single pair of values, offering a degree of design flexibility. Under a centralized approach, the network operator possesses both the network model and all individual converter models. This allows the operator to compute the set $\mathcal{L}_\kappa$ for the network and directly verify whether a common pair exists within the set that satisfies the conditions for all converters simultaneously. Conversely, in a decentralized approach, the network operator has access to the network model but lacks the specific converter models. In this case, a pair must be defined a priori, and an effective trade-off between the gain and phase requirements needs to be found to establish standardized compliance conditions for manufacturers. 
While this paper provides some design guidelines, establishing a truly optimal selection procedure will ultimately rely on practical experience.

\begin{rem}
    In this paper, motivated by the need for a gain condition on GFLs and a phase condition on GFMs, we restrict attention to a two-set partition mixing only gain and phase criteria. More generally, however, by constructing an analogous partitioned multiplier, Theorem~\ref{thm:ext_mixed_small_gain_phase} can be extended to any number of subsets, up to the number of devices, with each subset assigned a different gain, phase, or even passivity-index \cite{chenUnifiedFlexibleFrequencyDomain2025} requirement.
\end{rem}

\subsection{Characterization of $\mathcal{L}_\kappa(\omega)$}
To better understand how a possible pair $(\gamma,\theta)$ in Theorem \ref{thm:ext_mixed_small_gain_phase} can be selected, we study some properties of $\mathcal{L}_\kappa(\omega)$. In particular, we give some results on boundedness and convexity. To this end, we partition the network system $\mathbf{J}_{net}^{-1}$, according to the set $\mathcal{I}_g$ and $\mathcal{I}_{\phi}$ as:
\begin{equation}
    \mathbf{J}_{net}^{-1} = \begin{bmatrix}
        \mathbf{H}_{net}^{gg} & \mathbf{H}_{net}^{g\phi} \\
        \mathbf{H}_{net}^{\phi g} & \mathbf{H}_{net}^{\phi \phi}
    \end{bmatrix}
\end{equation}
and rewrite \eqref{eq:Jnet_QC} as:
\begin{multline}
    \begin{bmatrix}
        -\gamma[\mathbf{H}_{net}^{gg}]^*\mathbf{H}_{net}^{gg}+\gamma^{-1}I_m & -\gamma[\mathbf{H}_{net}^{gg}]^*\mathbf{H}_{net}^{g\phi} \\
        -\gamma[\mathbf{H}_{net}^{g \phi}]^*\mathbf{H}_{net}^{gg} & -\gamma[\mathbf{H}_{net}^{g \phi}]^*\mathbf{H}_{net}^{g \phi}
    \end{bmatrix} + \\
    \kappa \begin{bmatrix}
        0_m & e^{j\theta} [\mathbf{H}_{net}^{\phi g}]^* \\
         e^{-j\theta} \mathbf{H}_{net}^{\phi g} & 
        e^{j\theta} [\mathbf{H}_{net}^{\phi \phi}]^*+ e^{-j\theta} \mathbf{H}_{net}^{\phi \phi} 
    \end{bmatrix} \succeq 0
    \label{eq:Jnet_QC_dec}
\end{multline}
where we divide the dependence on $\gamma$ from $\kappa$ and $\theta$.
The following results provide guarantees on the convexity of the set.
\begin{prop}[Convexity in $\gamma$]
    If $(\tilde{\gamma} ,\theta) \in \mathcal{L}_\kappa(\omega)$, then $(\gamma,\theta) \in \mathcal{L}_\kappa(\omega)$ for all $0 < \gamma \leq \tilde{\gamma}$.
    \label{prop:gamma_conv}
\end{prop}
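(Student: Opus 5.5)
The plan is to fix $\omega$, $\kappa$ and $\theta$ and treat the left-hand side of \eqref{eq:Jnet_QC_dec} as a matrix-valued function $P(\gamma)$ of the single scalar $\gamma>0$. The decomposition in \eqref{eq:Jnet_QC_dec} already isolates the $\gamma$-dependence. I will write
\begin{equation*}
P(\gamma) = -\gamma A + \gamma^{-1} B + C,
\end{equation*}
where
\begin{equation*}
A = \begin{bmatrix} [\mathbf{H}_{net}^{gg}]^* \\ [\mathbf{H}_{net}^{g\phi}]^* \end{bmatrix}\begin{bmatrix} \mathbf{H}_{net}^{gg} & \mathbf{H}_{net}^{g\phi} \end{bmatrix}, \qquad B = \begin{bmatrix} I_m & 0\\ 0 & 0\end{bmatrix},
\end{equation*}
and $C$ is the $\kappa$-, $\theta$-dependent term, which does not depend on $\gamma$. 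I will check that the first block matrix in \eqref{eq:Jnet_QC_dec} factors exactly as $-\gamma A + \gamma^{-1}B$. This follows by multiplying out the Gram product, since every $\gamma$-weighted entry has the form $-\gamma[\mathbf{H}_{net}^{g\cdot}]^*\mathbf{H}_{net}^{g\cdot}$.

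The key observation is that $A\succeq 0$, being of the form $M^*M$, and $B\succeq 0$. The hypothesis $(\tilde\gamma,\theta)\in\mathcal{L}_\kappa(\omega)$ means $P(\tilde\gamma)\succeq 0$. For $0<\gamma\le\tilde\gamma$, I will write
\begin{equation*}
P(\gamma)=P(\tilde\gamma)+\bigl(P(\gamma)-P(\tilde\gamma)\bigr),
\end{equation*}
and compute
\begin{equation*}
P(\gamma)-P(\tilde\gamma) = (\tilde\gamma-\gamma)A + \Bigl(\tfrac{1}{\gamma}-\tfrac{1}{\tilde\gamma}\Bigr)B,
\end{equation*}
in which both scalar coefficients are nonnegative when $\gamma\le\tilde\gamma$. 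A nonnegative combination of positive semidefinite matrices is positive semidefinite, so $P(\gamma)$ is a sum of two PSD matrices and is therefore PSD. This means $(\gamma,\theta)\in\mathcal{L}_\kappa(\omega)$. Since $\mathcal{L}_\kappa(\omega)$ is defined through \eqref{eq:Jnet_QC}, I will note that \eqref{eq:Jnet_QC} and \eqref{eq:Jnet_QC_dec} are the same inequality after expanding with the multiplier \eqref{eq:Pi_phase_gain}.

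The main obstacle is essentially bookkeeping: confirming the sign structure, namely that $\gamma$ enters only through $-\gamma A$ with $A$ PSD and through $+\gamma^{-1}B$ with $B$ PSD. Both terms then move in the favorable direction as $\gamma$ decreases. Once this is verified, the monotonicity in $\gamma$ is immediate. The resulting statement, that each $\theta$-slice is a down-closed interval $(0,\bar\gamma(\theta)]$ and hence convex, is the convexity claimed.
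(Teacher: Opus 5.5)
Your proof is correct and follows the paper's argument essentially verbatim. Both write $P(\gamma)=-\gamma A+\gamma^{-1}B+C$ with $A=\bigl[\mathbf{H}_{net}^{gg}\;\;\mathbf{H}_{net}^{g\phi}\bigr]^*\bigl[\mathbf{H}_{net}^{gg}\;\;\mathbf{H}_{net}^{g\phi}\bigr]\succeq 0$ and $B=\mathrm{diag}(I_m,0)\succeq 0$, and conclude from $P(\gamma)-P(\tilde{\gamma})=(\tilde{\gamma}-\gamma)A+(\gamma^{-1}-\tilde{\gamma}^{-1})B\succeq 0$ for $0<\gamma\le\tilde{\gamma}$.
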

\begin{proof}
    Given in Appendix \ref{prof:gamma_conv}.
\end{proof}

\begin{prop}[Convexity in $\theta$]
    If $(\gamma,\theta_1), (\gamma,\theta_2) \in \mathcal{L}_\kappa(\omega)$, with $\theta_1 \leq \theta_2$, then $(\gamma,\theta) \in \mathcal{L}_\kappa(\omega)$ for all $\theta_1 \leq \theta \leq \theta_2$.
    \label{prop:theta_conv}
\end{prop}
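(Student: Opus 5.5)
The plan is to fix $\gamma$ and $\kappa$, remove the $\gamma$-dependent blocks by a Schur complement, and reduce \eqref{eq:Jnet_QC_dec} to a family of scalar trigonometric inequalities in $\theta$, one for each unit vector. The statement then follows if each scalar admissible set contains the real segment $[\theta_1,\theta_2]$ whenever it contains both endpoints.

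\textbf{Step 1 (Schur complement).} Write \eqref{eq:Jnet_QC_dec} in $2\times 2$ block form. The $(1,1)$ block is $P=\gamma^{-1}I_m-\gamma[\mathbf{H}_{net}^{gg}]^*\mathbf{H}_{net}^{gg}$, and the $(1,2)$ block is $Q+\kappa e^{j\theta}[\mathbf{H}_{net}^{\phi g}]^*$ with $Q=-\gamma[\mathbf{H}_{net}^{gg}]^*\mathbf{H}_{net}^{g\phi}$. I will first argue that $P\succ 0$ may be assumed, treating the singular case $P\succeq 0$ by a limiting or range argument. Then \eqref{eq:Jnet_QC_dec} is equivalent to
\begin{equation*}
S(\theta)=A_0+\kappa e^{-j\theta}B+\kappa e^{j\theta}B^*\succeq 0 .
\end{equation*}
Here $A_0$ collects the $\theta$-independent terms, including $-\kappa^2\mathbf{H}_{net}^{\phi g}P^{-1}[\mathbf{H}_{net}^{\phi g}]^*$, because the products $e^{j\theta}e^{-j\theta}$ cancel. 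The matrix $B=\mathbf{H}_{net}^{\phi\phi}-\mathbf{H}_{net}^{\phi g}P^{-1}Q$ does not depend on $\theta$. I will also check that $A_0\preceq 0$; this holds because $R\preceq 0$ and the two subtracted terms are positive semidefinite.

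\textbf{Step 2 (scalar reduction).} For each unit vector $x$, set $a_x=-x^*A_0x\ge 0$ and $b_x=x^*Bx$. The condition $x^*S(\theta)x\ge 0$ then reads
\begin{equation*}
2\kappa|b_x|\cos(\theta-\angle b_x)\ge a_x .
\end{equation*}
Viewed on $\mathbb{R}$, its solution set $T_x$ is one of the following. It is empty. It is all of $\mathbb{R}$, which happens only when $a_x=b_x=0$. Otherwise it is a $2\pi$-periodic union of closed intervals $[\angle b_x-\delta_x,\angle b_x+\delta_x]+2\pi k$ with $\delta_x=\arccos\!\big(a_x/(2\kappa|b_x|)\big)\le\pi/2$. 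The set in the statement is $\bigcap_x T_x\cap[-\pi,\pi]$, so it suffices to show that $\theta_1,\theta_2\in T_x$ implies $[\theta_1,\theta_2]\subseteq T_x$ for every $x$.

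\textbf{Step 3 (main obstacle: the wrap-around).} If $\theta_1$ and $\theta_2$ lie in the same periodic copy of the interval, the inclusion is immediate. The difficulty is the case where they lie in different copies. Since each copy has length at most $\pi$ and copies are $2\pi$ apart, this requires $\theta_2-\theta_1\ge \pi$. This is possible inside $[-\pi,\pi]$ only when the arc is centered near $\angle b_x\approx\pm\pi$.

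A per-$x$ argument cannot rule this out by itself. For example, an arc centered at $\pi$ admits $\theta_1=-\pi/2$ and $\theta_2=\pi/2$ but excludes $\theta=0$. I would therefore first try to exploit the structure across all $x$ at once: show that if $S(\theta_1)\succeq0$ and $S(\theta_2)\succeq0$ with $\theta_2-\theta_1\ge\pi$, then the arcs are forced to be of full width with $a_x=0$. In that degenerate case one must check directly whether the midpoint is admissible.

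If that attempt fails, the honest outcome is that the statement as worded holds only under the extra hypothesis $\theta_2-\theta_1<\pi$. Equivalently, it holds when no admissible arc contains $\theta=\pm\pi$ in its interior, and $\theta$ should then be read modulo $2\pi$. I expect to have to state this restriction explicitly rather than close the gap, since the scalar example above suggests that a genuine counterexample can be built by taking $B\approx-I$ with $A_0$ small.
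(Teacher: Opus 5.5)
Your Steps 1 and 2 are the paper's proof: the same Schur complement, the same $A_0\preceq 0$ and $B$, and the same per-vector inequality $2\kappa\operatorname{Re}(e^{-j\theta}x^*Bx)\ge -x^*A_0x\ge 0$, which confines $\theta$ to an arc of length at most $\pi$. The paper then finishes in one line, asserting that an intersection of such arcs is an arc of length at most $\pi$ and hence an interval. Your Step 3 objection targets exactly that line, and it is correct.

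Your counterexample is genuine. Take $\mathbf{H}_{net}^{g\phi}=\mathbf{H}_{net}^{\phi g}=0$, $\mathbf{H}_{net}^{\phi\phi}=-I_l$ and $\gamma<1/\overline{\sigma}(\mathbf{H}_{net}^{gg})$. Then the matrix in \eqref{eq:Jnet_QC_dec} is $\mathrm{diag}(P,-2\kappa\cos\theta\, I_l)$, so the admissible set in $[-\pi,\pi]$ is $[-\pi,-\pi/2]\cup[\pi/2,\pi]$. It contains $\pm\pi/2$ but not $0$. The proposition as worded, with $\theta$ a real number in $[-\pi,\pi]$, is therefore false. The paper's argument at best gives it with $\theta$ read modulo $2\pi$. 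Wrap-around cannot occur, for example, when $\mathbf{H}_{net}^{\phi\phi}(j\omega)$ is sectorial with phases in $(-\pi/2,\pi/2)$, because then \eqref{eq:theta_bound} keeps every admissible $\theta$ inside $(-\pi,\pi)$.

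Drop the rescue attempt; it cannot succeed, for two reasons.
\begin{itemize}
\item Add a small coupling $\mathbf{H}_{net}^{\phi g}=\delta\neq 0$ to the example above, with $m=l=1$. This gives $A_0=-\kappa^2|\delta|^2/P<0$ and $B=-1$. The admissible set becomes $\{|\theta|\ge\beta\}$ with $\beta=\arccos(-\kappa|\delta|^2/(2P))\in(\pi/2,\pi)$. The wrap-around survives with arcs strictly shorter than $\pi$, so $\theta_2-\theta_1>\pi$ does not force $a_x=0$.
\item The boundary case $\theta_2-\theta_1=\pi$ also fails. There $S(\theta_1)+S(\theta_2)=2A_0$ forces $A_0=0$ and $S(\theta_1)=0$. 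The decoupled block $\mathbf{H}_{net}^{\phi\phi}=j\,\mathrm{diag}(1,-1)$ then gives $S(\theta)=2\kappa\sin\theta\,\mathrm{diag}(1,-1)$, whose admissible set is $\{-\pi,0,\pi\}$.
\end{itemize}

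The second example has two further consequences. First, the paper's claim that an intersection of arcs is an arc fails for closed semicircles: two antipodal points can remain. Second, your phrase ``equivalently, no admissible arc contains $\pm\pi$ in its interior'' is not equivalent to your main hypothesis. Here every arc contains $\pm\pi$ only as an endpoint, yet convexity fails. The same example shows that the claim that different copies occur ``only when the arc is centered near $\pm\pi$'' is too loose.

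Commit to the hypothesis $\theta_2-\theta_1<\pi$. Under it, $\theta_1$ and $\theta_2$ must lie in the same periodic copy of every $T_x$, and your argument is then complete.

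Your limiting argument for singular $P$ also works. The paper's claim that $\mathcal{L}_\kappa(\omega)$ is empty unless $P\succ0$ is wrong: equality in \eqref{eq:gamma_bound} can be attained, for example in a decoupled network. The argument runs as follows.
\begin{itemize}
\item If $P(\gamma)\succeq 0$, then $P(\gamma')\succ 0$ for every $\gamma'<\gamma$.
\item Proposition~\ref{prop:gamma_conv} places $(\gamma',\theta_1)$ and $(\gamma',\theta_2)$ in $\mathcal{L}_\kappa(\omega)$, so the nonsingular case applies at $\gamma'$.
\item The left-hand side of \eqref{eq:Jnet_QC_dec} is continuous in $\gamma'$ and the semidefinite cone is closed, so you can let $\gamma'\to\gamma$.
\end{itemize}
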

\begin{proof}
    Given in Appendix \ref{prof:theta_conv}.
\end{proof}
These properties inform the selection of appropriate $\gamma$ and $\theta$ (as shown later), and ensure that the set is well-behaved and non-degenerate. For instance, they guarantee that for a given $\gamma$, the admissible values of $\theta$ do not form disjoint intervals, which would otherwise complicate the choice of $\theta$. Moreover, they establish a monotonic, non-expanding behavior: as $\gamma$ increases, the set of admissible values for $\theta$ shrinks, and vice versa. We can also provide bounds on the set as follows.
\begin{prop}[Bound on maximum $\gamma$]
    If $(\gamma,\theta) \in \mathcal{L}_{\kappa}(\omega)$ then 
    \begin{equation}
        \gamma \leq \underline{\sigma}(\mathbf{H}_{net}^{gg}(j\omega)^{-1})
        \label{eq:gamma_bound}
    \end{equation}
\end{prop}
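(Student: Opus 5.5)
The plan is to extract from \eqref{eq:Jnet_QC_dec} the necessary condition given by its upper-left $m\times m$ diagonal block. A positive semidefinite Hermitian matrix has positive semidefinite principal submatrices. The $\kappa$-dependent term in \eqref{eq:Jnet_QC_dec} has a zero block $0_m$ in the upper-left position. Hence, if $(\gamma,\theta)\in\mathcal{L}_\kappa(\omega)$, then regardless of $\kappa$ and $\theta$,
\begin{equation*}
    \gamma^{-1}I_m-\gamma[\mathbf{H}_{net}^{gg}]^*\mathbf{H}_{net}^{gg}\succeq 0 .
\end{equation*}
Multiplying by $\gamma>0$ gives $[\mathbf{H}_{net}^{gg}]^*\mathbf{H}_{net}^{gg}\preceq \gamma^{-2}I_m$, i.e. $\overline{\sigma}(\mathbf{H}_{net}^{gg}(j\omega))\le 1/\gamma$.

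The second step converts this into the stated form. When $\mathbf{H}_{net}^{gg}(j\omega)$ is invertible, the singular values of the inverse are the reciprocals of those of the matrix, so $\underline{\sigma}((\mathbf{H}_{net}^{gg})^{-1})=1/\overline{\sigma}(\mathbf{H}_{net}^{gg})$. Therefore $\gamma\le 1/\overline{\sigma}(\mathbf{H}_{net}^{gg})=\underline{\sigma}((\mathbf{H}_{net}^{gg})^{-1})$, which is \eqref{eq:gamma_bound}.

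The only delicate point is well-posedness of the right-hand side of \eqref{eq:gamma_bound}, namely invertibility of $\mathbf{H}_{net}^{gg}(j\omega)$. The inequality $\gamma\le 1/\overline{\sigma}(\mathbf{H}_{net}^{gg})$ itself holds in all cases, with the convention $1/0=+\infty$ when $\mathbf{H}_{net}^{gg}=0$. If $\mathbf{H}_{net}^{gg}$ is singular but nonzero, I would state the bound as $\gamma\le 1/\overline{\sigma}(\mathbf{H}_{net}^{gg})$ and note that this coincides with \eqref{eq:gamma_bound} whenever the inverse exists. I would also remark that the bound is independent of $\kappa$ and $\theta$, which is consistent with Proposition~\ref{prop:gamma_conv}. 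This ties in with the standard decentralized gain bound $\underline{\sigma}(\mathbf{J}_{net})$, since $\mathbf{H}_{net}^{gg}$ is a principal block of $\mathbf{J}_{net}^{-1}$.
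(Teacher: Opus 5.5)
Your proof is correct and is essentially the paper's own argument: the paper also uses the upper-left $m\times m$ principal block of \eqref{eq:Jnet_QC_dec}, where the $\kappa$-term vanishes, and notes that $\gamma^{-1}I_m-\gamma[\mathbf{H}_{net}^{gg}]^*\mathbf{H}_{net}^{gg}\succeq 0$ is equivalent to \eqref{eq:gamma_bound}. Your extra remark on invertibility of $\mathbf{H}_{net}^{gg}(j\omega)$, stating the bound as $\gamma\le 1/\overline{\sigma}(\mathbf{H}_{net}^{gg})$ in general, makes explicit an assumption the paper leaves implicit.
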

\begin{proof}
    Inequality \eqref{eq:Jnet_QC_dec} holds only if $-\gamma[\mathbf{H}_{net}^{gg}]^*\mathbf{H}_{net}^{gg}+\gamma^{-1}I_m \succeq 0$, which is equivalent to \eqref{eq:gamma_bound}.
\end{proof}

\begin{prop}[Bound on $\theta$]
    If $(\gamma,\theta) \in \mathcal{L}_{\kappa}(\omega)$ then 
    \begin{equation}
        -\frac{\pi}{2}+\overline{\phi}(\mathbf{H}_{net}^{\phi \phi}(j\omega)) \leq \theta \leq \frac{\pi}{2}+\underline{\phi}(\mathbf{H}_{net}^{\phi \phi}(j\omega))
        \label{eq:theta_bound}
    \end{equation}
\end{prop}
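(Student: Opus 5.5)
The plan is to extract a necessary condition from a principal submatrix of \eqref{eq:Jnet_QC_dec}, in the same way the bound on $\gamma$ was obtained from the upper-left block. A positive semidefinite Hermitian matrix has all of its principal submatrices positive semidefinite. So if $(\gamma,\theta)\in\mathcal{L}_\kappa(\omega)$, the lower-right $l\times l$ block of \eqref{eq:Jnet_QC_dec} must satisfy
\begin{equation*}
-\gamma[\mathbf{H}_{net}^{g\phi}]^*\mathbf{H}_{net}^{g\phi}+\kappa\left(e^{j\theta}[\mathbf{H}_{net}^{\phi\phi}]^*+e^{-j\theta}\mathbf{H}_{net}^{\phi\phi}\right)\succeq 0 .
\end{equation*}
The first term is negative semidefinite because $\gamma>0$. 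Since $\kappa>0$, this gives
\begin{equation*}
e^{-j\theta}\mathbf{H}_{net}^{\phi\phi}+\left(e^{-j\theta}\mathbf{H}_{net}^{\phi\phi}\right)^*\succeq 0 .
\end{equation*}

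Next I translate this Hermitian-part condition into a statement about the numerical range. For every unit vector $x$, the inequality above reads $2\operatorname{Re}\!\left(e^{-j\theta}x^*\mathbf{H}_{net}^{\phi\phi}x\right)\ge 0$. Hence $e^{-j\theta}W(\mathbf{H}_{net}^{\phi\phi})$ lies in the closed right half-plane. Equivalently, every nonzero $w\in W(\mathbf{H}_{net}^{\phi\phi})$ satisfies
\begin{equation*}
-\frac{\pi}{2}\le \angle w-\theta\le\frac{\pi}{2}.
\end{equation*}
The phases $\overline{\phi}$ and $\underline{\phi}$ were defined in Section~\ref{sec:gain_phase_crit} as the angular boundaries of the numerical range. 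Taking the supremum and infimum of $\angle w$ over $W(\mathbf{H}_{net}^{\phi\phi})$ therefore gives $\overline{\phi}(\mathbf{H}_{net}^{\phi\phi})-\theta\le\pi/2$ and $\underline{\phi}(\mathbf{H}_{net}^{\phi\phi})-\theta\ge-\pi/2$. Rearranging these two inequalities yields exactly \eqref{eq:theta_bound}.

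The main obstacle is making the phase notion rigorous, since the earlier definitions assume (semi-/quasi-)sectoriality. I will argue that the half-plane containment just derived shows $W(\mathbf{H}_{net}^{\phi\phi})$ lies in a closed half-plane. Consequently $\mathbf{H}_{net}^{\phi\phi}(j\omega)$ is at least semi-sectorial, so its phases are well defined in the sense of \cite{chenPhaseTheoryMIMO2022}. Two fine points remain. First, the branch of the argument must be chosen consistently with $\theta\in[-\pi,\pi]$, namely so that $\Phi(\mathbf{H}_{net}^{\phi\phi})$ is the interval of length at most $\pi$ containing $\theta$ up to $\pm\pi/2$. Second, the degenerate case where $0\in\partial W$ or $\mathbf{H}_{net}^{\phi\phi}=0$ needs separate treatment; there the bound holds trivially or under the semi-sectorial convention. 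I would state these conventions explicitly rather than prove anything deeper. As a remark, I would note that the bound is independent of $\gamma$ and $\kappa$, and that it is consistent with Proposition~\ref{prop:theta_conv}: the admissible $\theta$ form an arc of length at most $\pi$.
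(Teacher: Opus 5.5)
Your proposal is correct and follows the paper's own argument: restrict \eqref{eq:Jnet_QC_dec} to its lower-right principal block, drop the negative semidefinite term $-\gamma[\mathbf{H}_{net}^{g\phi}]^*\mathbf{H}_{net}^{g\phi}$ to obtain $e^{-j\theta}\mathbf{H}_{net}^{\phi\phi}+(e^{-j\theta}\mathbf{H}_{net}^{\phi\phi})^*\succeq 0$, and read this as the rotated numerical range lying in the closed right half-plane, which gives the bound. Your remarks on choosing the branch of the argument consistently with $\theta\in[-\pi,\pi]$ and on the degenerate case $\mathbf{H}_{net}^{\phi\phi}=0$ make explicit conventions that the paper's one-line equivalence step leaves implicit.
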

\begin{proof}
    Inequality \eqref{eq:Jnet_QC_dec} holds only if $-\gamma[\mathbf{H}_{net}^{g \phi}]^*\mathbf{H}_{net}^{g \phi}+\kappa e^{j\theta} [\mathbf{H}_{net}^{\phi \phi}]^*+ \kappa e^{-j\theta} \mathbf{H}_{net}^{\phi \phi} \succeq 0$. The first term is negative semi-definite, therefore we must have $e^{j\theta} [\mathbf{H}_{net}^{\phi \phi}]^*+e^{-j\theta} \mathbf{H}_{net}^{\phi \phi} \succeq 0$, which is equivalently to \eqref{eq:theta_bound}.
\end{proof}


These results show that the partitioned criterion can, in general, yield larger limits on gain and phase than the uniform decentralized conditions.
Formally, $\underline{\sigma}(\mathbf{H}_{net}^{gg}(j\omega)^{-1}) \geq \underline{\sigma}(\mathbf{J}_{net}(j\omega))$, $\underline{\phi}(\mathbf{J}_{net}^{-1}(j\omega)) \leq \underline{\phi}(\mathbf{H}_{net}^{\phi \phi}(j\omega))$, and $\overline{\phi}(\mathbf{J}_{net}^{-1}(j\omega)) \geq \overline{\phi}(\mathbf{H}_{net}^{\phi \phi}(j\omega))$~\cite{wangPhasesComplexMatrix2020}. These bounds cannot, however, be reached at the same time: a larger $\gamma$ shrinks the admissible range of $\theta$. The exception is the decoupled case, $\mathbf{H}_{net}^{g\phi}=\mathbf{H}_{net}^{\phi g}=0$, in which the bounds are no longer conservative and describe $\mathcal{L}_\kappa(\omega)$ exactly, so each limit can be attained independently.

\subsection{Example}
\label{sec:example_l_k}
As an illustrative example, we consider the two-converter system shown in Figure \ref{fig:2ConvSys}. Figure \ref{fig:Lkappa_subplots} illustrates how the set $\mathcal{L}_\kappa(\omega)$, for $\kappa=1$, $\omega=100$~Hz, $\mathcal{I}_g=\{1\}$, and $\mathcal{I}_\phi=\{2\}$, varies for different values of $Z_1$ and $Z_2$. The figure also compares the resulting gain and phase limits with those obtained from the uniform small-gain and small-phase criteria, as well as with the bounds given in \eqref{eq:gamma_bound}--\eqref{eq:theta_bound}.

\begin{figure}[!t]
    \centering
    \begin{subfigure}{\linewidth}
        \centering
        \includegraphics[width=\linewidth]{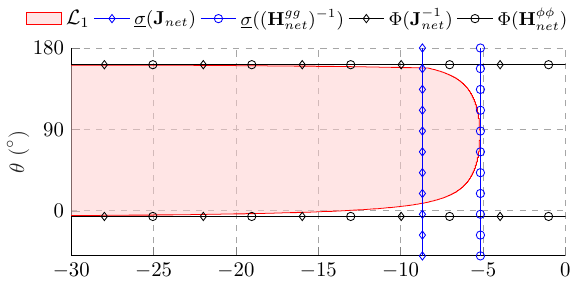}
        \caption{$Z_2 = Z_1$}
        \label{fig:Lkappa_F1}
    \end{subfigure}
    \begin{subfigure}{\linewidth}
        \centering
        \includegraphics[width=\linewidth]{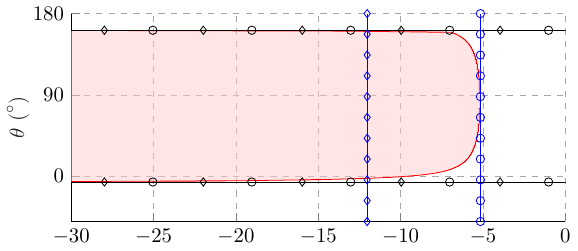}
        \caption{$|Z_2| \gg |Z_1|$}
        \label{fig:Lkappa_F2}
    \end{subfigure}
    \begin{subfigure}{\linewidth}
        \centering
        \includegraphics[width=\linewidth]{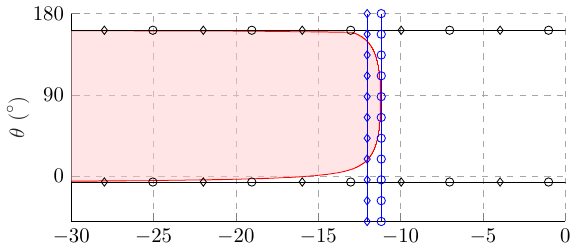}
        \caption{$|Z_2| \ll |Z_1|$}
        \label{fig:Lkappa_F3}
    \end{subfigure}
    \begin{subfigure}{\linewidth}
        \centering
        \includegraphics[width=\linewidth]{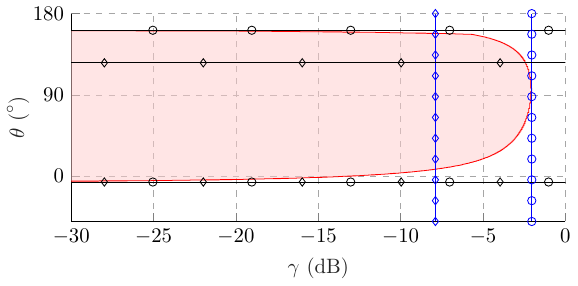}
        \caption{$ R_2/X_2 = X_1/R_1$}
        \label{fig:Lkappa_F4}
    \end{subfigure}
    \caption{Illustration of the $\mathcal{L}_\kappa(\omega)$ and its bounds for the 2 converter system, with different values of $Z_1$ and $Z_2$.}
    \label{fig:Lkappa_subplots}
\end{figure}

We consider four representative cases:
\begin{itemize}
\item Figure \ref{fig:Lkappa_F1} shows the baseline case, in which all line impedances are identical, i.e., $Z_0=Z_1=Z_2=0.1+j0.3$. This case confirms the propositions presented above regarding the shape of $\mathcal{L}_\kappa$. Moreover, the maximum and minimum phase coincide with those obtained from $\mathbf{J}_{net}^{-1}$ owing to the symmetry of the network. The partitioned criterion allows the selection of a larger gain than the corresponding uniform criterion. However, there is a trade-off between gain and phase: higher gain implies stricter choice in phase, and vice versa. This does not mean that there is always a price to pay when partitioning the criteria, since we can ideally obtain better conditions on both gain and phase, e.g. where the two subsets are completely decoupled.

\item Figure \ref{fig:Lkappa_F2} considers the case where the impedance to converter $2$ is significantly larger than that to converter $1$, namely $Z_2=3Z_1$. Here, the proposed theorem not only decomposes the stability requirements into a gain condition for converter $1$ and a phase condition for converter $2$, but also substantially relaxes the gain requirement. The underlying intuition is straightforward. The network contains two connection buses with markedly different strengths. Under the uniform small-gain criterion, a converter is assumed to be connected to either bus, and the resulting condition is dictated by the weakest connection point. 
In contrast, the partitioned criterion applies the gain condition only at the stronger bus and enforces the phase condition at the weaker one. By exploiting the disparity in electrical strength between the two buses, it avoids the conservative gain dictated by the weakest connection, thereby relaxing the gain requirement on converter $1$.


\item Figure \ref{fig:Lkappa_F3} presents the complementary scenario, in which $Z_1$ is much larger than $Z_2$. In this case, only a limited improvement in the gain condition is observed, since the gain criterion is associated with the weaker bus.

\item Finally, Figure \ref{fig:Lkappa_F4} illustrates the effect of non-uniform $R/X$ ratios. Specifically, the ratio $R_1/X_1=1/3$ remains unchanged, while $R_2/X_2=3$. Focusing on the phase condition, the partitioned criterion enlarges the admissible phase region compared with the uniform phase criterion. As in the previous cases with the gain, constraining the converter to a predetermined connection bus reduces the range of phase shifts introduced by the network, thereby expanding the feasible set.
\end{itemize}

In conclusion, these examples provide key insights into when partitioning stability criteria is most advantageous for power system stability analysis, highlighting that its efficacy depends on grid heterogeneity and topology.

When a network features buses with vastly different short-circuit strengths, partitioning enables a strategic segregation of constraints: applying gain conditions to converters at strong buses and phase conditions to those at weak buses safely relaxes the gain requirement where the grid is robust, while enforcing the phase control needed at weak buses, where high gain combined with adverse phase shifts would otherwise trigger instability.

Regarding $R/X$ ratio uniformity, the uniform small-phase criterion is inherently conservative, as it must accommodate the worst-case phase shift across the entire network. Partitioning isolates these variations, reducing the localized phase shift and yielding less restrictive phase bounds. In practice, however, since real-world grids tend to maintain fairly uniform $R/X$ ratios within a voltage tier, the margin gained from ratio non-uniformity may be limited.

Finally, network connectivity largely dictates the overall benefit. In grids that decompose into weakly coupled sub-areas, partitioning yields a substantial relaxation of the conservative uniform bounds, offering considerably more design flexibility. In densely meshed networks, by contrast, the tight coupling limits the advantages of a partitioned analysis.

%% file: sec5_ParSelection.tex
\section{Selection of $\theta$, $\gamma$ and $\kappa$}
\label{sec5}

Condition 3 of Theorem \ref{thm:ext_mixed_small_gain_phase} requires the existence of a triplet $(\gamma,\kappa,\theta)$ satisfying the given inequalities. In theory, one could characterize all the possible $(\gamma,\kappa,\theta)$ that satisfies \eqref{eq:Jnet_QC} and then check if there is any triplet for which the converters' conditions are valid. This can be computation heavy since it needs to be done at each frequency.
In this section we propose a more efficient procedure. Instead of searching over all triplets, we derive bounds for the triplet from the network. Each converter is then checked locally against these bounds, and once all converters satisfy them, the existence of a triplet $(\gamma,\kappa,\theta)$ satisfying~\eqref{eq:Jnet_QC} follows.
First, we assume $\kappa$ is given and we focus on the choice of $(\gamma,\theta)$.

In this scope, we define the minimum and maximum phase $\theta$ given $\gamma$ as follows:
\begin{equation}
    \overline{\theta}(\omega; \gamma) := \max\{\theta: (\gamma,\theta) \in \mathcal{L}_\kappa(\omega)\}
\end{equation}
\begin{equation}
    \underline{\theta}(\omega; \gamma) := \min\{\theta: (\gamma,\theta) \in \mathcal{L}_\kappa(\omega)\}
\end{equation}
These conditions are well-posed given convexity and boundness in $\theta$ of the set $\mathcal{L}_\kappa$. The following result allows to reframe conditions \eqref{eq:phase_upper_bound}-\eqref{eq:phase_lower_bound} in terms of $\overline{\theta}$ and $\underline{\theta}$.

\begin{prop}
    Let $\gamma$ and $\kappa$ be fixed. There exists $\theta \in [\underline{\theta}(\omega;\gamma), \overline{\theta}(\omega;\gamma)]$ such that conditions \eqref{eq:phase_upper_bound}--\eqref{eq:phase_lower_bound} hold if and only if:
    \begin{subequations}
            \begin{equation}
                \overline{\phi}(J_{C_i}(j\omega))<\frac{\pi}{2}-\underline{\theta}(\omega; \gamma) \quad \forall i \in \mathcal{I}_\phi
                \label{eq:thA:2_1}
            \end{equation}
            \begin{equation}
                \underline{\phi}(J_{C_i}(j\omega))>-\frac{\pi}{2}-\overline{\theta}(\omega; \gamma)\quad \forall i \in \mathcal{I}_\phi
                \label{eq:thA:2_2}
            \end{equation}
            \begin{equation}
                \max_{i \in \mathcal{I}_\phi} \overline{\phi}(J_{C_i}(j\omega)) - \min_{i \in \mathcal{I}_\phi} \underline{\phi}(J_{C_i}(j\omega)) < \pi
                \label{eq:thA:2_3}
            \end{equation}
            \label{eq:conditions_with_bounds}
        \end{subequations}
        \label{prop:conditions_with_bounds}
\end{prop}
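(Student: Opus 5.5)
The plan is to recast everything as interval intersections on the real line. Conditions \eqref{eq:phase_upper_bound}--\eqref{eq:phase_lower_bound} are linear in $\theta$. Isolating $\theta$ and taking the binding converters, they hold for all $i\in\mathcal{I}_\phi$ if and only if $\theta\in(L,U)$, where
\[
L:=-\tfrac{\pi}{2}-\min_{i\in\mathcal{I}_\phi}\underline{\phi}(J_{C_i}(j\omega)), \qquad U:=\tfrac{\pi}{2}-\max_{i\in\mathcal{I}_\phi}\overline{\phi}(J_{C_i}(j\omega)).
\]
The statement then reduces to showing that $[\underline{\theta},\overline{\theta}]\cap(L,U)\neq\emptyset$ if and only if \eqref{eq:thA:2_1}--\eqref{eq:thA:2_3} hold. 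We also observe that \eqref{eq:thA:2_1} is exactly $\underline{\theta}<U$, \eqref{eq:thA:2_2} is exactly $L<\overline{\theta}$, and \eqref{eq:thA:2_3} is exactly $L<U$.

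\emph{Necessity.} Suppose some $\theta\in[\underline{\theta},\overline{\theta}]$ satisfies the phase conditions. Then $L<\theta<U$, which gives \eqref{eq:thA:2_3} at once. Combining $\underline{\theta}\le\theta<U$ gives \eqref{eq:thA:2_1} for every $i$, and combining $L<\theta\le\overline{\theta}$ gives \eqref{eq:thA:2_2}.

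\emph{Sufficiency.} The three strict inequalities $\underline{\theta}<U$, $L<\overline{\theta}$ and $L<U$, together with $\underline{\theta}\le\overline{\theta}$, imply that the closed interval and the open interval overlap. To see this, set $a=\max(\underline{\theta},L)$ and $b=\min(\overline{\theta},U)$. Each of the four comparisons between an element of $\{\underline{\theta},L\}$ and an element of $\{\overline{\theta},U\}$ is valid, so $a\le b$, with strict inequality whenever an open endpoint is involved. Hence a point of the intersection exists; for example, a point slightly to the right of $a$, or $\underline{\theta}$ itself when $L<\underline{\theta}$. I would write this small case check explicitly.

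The remaining point, and the one that needs care, is that the chosen $\theta$ must actually lie in $\mathcal{L}_\kappa(\omega)$ together with the fixed $\gamma$, not merely between the extreme values. This is where Proposition~\ref{prop:theta_conv} is needed. Because $\underline{\theta}$ and $\overline{\theta}$ are attained (the set is closed and bounded by \eqref{eq:theta_bound}), convexity in $\theta$ shows that every $\theta\in[\underline{\theta},\overline{\theta}]$ satisfies $(\gamma,\theta)\in\mathcal{L}_\kappa(\omega)$. So the $\theta$ found above is admissible and meets \eqref{eq:phase_upper_bound}--\eqref{eq:phase_lower_bound}.
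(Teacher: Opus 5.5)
Your proposal is correct and follows essentially the same route as the paper: the same reduction to $[\underline{\theta},\overline{\theta}]\cap(L,U)\neq\emptyset$ with identical $L$ and $U$, the same identification of \eqref{eq:thA:2_1}--\eqref{eq:thA:2_3} with $\underline{\theta}<U$, $L<\overline{\theta}$ and $L<U$, and the same final appeal to Proposition~\ref{prop:theta_conv} for admissibility. Your explicit case check, which also uses $\underline{\theta}\le\overline{\theta}$, and your remark that the extremes are attained only spell out what the paper compresses into ``hence the intersection is non-empty.''
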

\begin{proof}
    Given in Appendix \ref{prof:conditions_with_bounds}.
\end{proof} 

Using \eqref{eq:conditions_with_bounds} there is no need to search for a $\theta$, but the condition only needs to be checked on the limits, which is much more practical.

While we ideally seek to maximize the gain $\gamma$, the optimal choices for the phase limits $\overline{\theta}$ and $\underline{\theta}$ are less straightforward, since it depends on the phases of all the converters (which are not known in a decentralized scheme). Therefore, we first select $\gamma$, after which the corresponding phase limits are determined automatically.
Let us define the maximum $\gamma$ across the full set $\mathcal{L}_\kappa$ as:
\begin{equation}
    \overline{\gamma}(\omega) := \max\{\gamma: (\gamma,\theta) \in \mathcal{L}_\kappa(\omega)\}
\end{equation}
Again, the definition is well-posed given the properties of $\mathcal{L}_\kappa$. Taking this as our $\gamma$ in \eqref{eq:gain_condition_gain_converters} would provide very strict limits on the phases. Therefore, a wiser choice is to take $\gamma=\eta\overline{\gamma}$, where $\eta$ is a positive value smaller than one. The value of $\eta$ dictates the tradeoff between gain and phase requirements.

Finally, we address the choice of $\kappa$. Ideally, we seek a unique optimal choice, which is guaranteed only if the family $\{\mathcal{L}_\kappa\}$ admits a greatest element under set inclusion, i.e., a set that contains all others:
\begin{equation*}
\exists\, \tilde{\kappa} \ \text{such that} \ \forall \kappa \ge 0,\ \mathcal{L}_\kappa \subseteq \mathcal{L}_{\tilde{\kappa}}.
\end{equation*}
Although we do not have a general proof of this property, we assume it holds throughout the paper; it has been verified numerically for the examples considered here. Under this assumption, $\mathcal{L}_{\tilde{\kappa}}$ contains all other sets and therefore yields the maximum attainable value of $\gamma$.
Consequently, a natural approach to computing the optimal $\tilde{\kappa}$ is to find the largest $\gamma$ that satisfies \eqref{eq:Jnet_QC} over all admissible $\kappa$ and $\theta$. In its current formulation, the optimization problem is computationally inefficient. However, it can be cast as semi-definite program (SDP) as follows
\begin{equation}
\begin{aligned}
  (\tilde{z},\tilde{c}) = \arg \max_{c,z} \; & c\\
    \textrm{s.t.} \; & \begin{bmatrix}
                \mathbf{J}_{net}^{-1}(j\omega) \\ I_n
            \end{bmatrix}^*
            \Pi(j\omega)
            \begin{bmatrix}
                \mathbf{J}_{net}^{-1}(j\omega) \\ I_n
            \end{bmatrix}
            \succeq 0 \\
            &\Pi(j\omega) = \begin{bmatrix}
                -c I_m & 0 & 0 & 0 \\
                0 & 0 & 0 & zI_l \\
                0 & 0 & I_m & 0 \\
                0 & z^* I_l & 0 & 0
            \end{bmatrix} \\
    &c\geq0,\, z \in \mathbb{C}    \\
\end{aligned}
\label{opt:k_calc}
\end{equation}
and taking:
\begin{equation}
    \kappa = \frac{|\tilde{z}|}{\sqrt{\tilde{c}}}
\end{equation}
The complete procedure is summarized in Algorithm~\ref{alg:kappa_gamma_theta}. The multiplier scaling $\kappa$ and the maximum gain bound $\overline{\gamma}$ are obtained directly from \eqref{opt:k_calc}. The operating gain $\gamma(\omega)$ is then fixed to a fraction $\eta$ of this maximum (representing the trade-off between gain and phase bounds, see Fig.~\ref{fig:Lkappa_subplots}), and the corresponding phase bounds $\underline{\theta}$ and $\overline{\theta}$ are computed (for example, using a bisection search).
\begin{algorithm}
    \caption{Selection of $\kappa$, $\gamma$, and phase bounds for fixed $\omega$}
    \label{alg:kappa_gamma_theta}
    \begin{algorithmic}[1]
        \REQUIRE $\mathbf{J}_{net}^{-1}(j\omega)$, partition sizes $m$ and $l$, scalar $\eta\in(0,1)$
        \STATE Solve \eqref{opt:k_calc} to obtain $(\tilde{z},\tilde{c})$
        \STATE Set $\tilde{\kappa} \leftarrow |\tilde{z}| / \sqrt{\tilde{c}}$
        \STATE Set $\overline{\gamma}(\omega) \leftarrow \sqrt{\tilde{c}}$
        \STATE Choose $\gamma(\omega) \leftarrow \eta\,\overline{\gamma}(\omega)$
        \STATE Compute $\underline{\theta}(\omega;\gamma)$ and $\overline{\theta}(\omega;\gamma)$
        \RETURN $ (\gamma(\omega),\underline{\theta}(\omega;\gamma),\overline{\theta}(\omega;\gamma))$
    \end{algorithmic}
\end{algorithm}

%% file: sec6_StudyCases.tex
\section{Study Cases}
\label{sec:6_StudyCases}

In this section, we present two case studies to demonstrate the application of the proposed method. We first consider a two-converter system and subsequently a 39-bus system to evaluate scalability. The grid-following converters are of a standard type, employing an LC filter, a PI-based control loop, constant-power reference control, and an SRF-PLL for synchronization. The grid-forming converters employ an LCL filter, a virtual synchronous machine (VSM) for power control and synchronization, a virtual admittance, and a resonant current controller. Further details and parameters are available in \cite{Repo}.

\subsection{Two-Converter System}
The system is shown in Fig.~\ref{fig:2ConvSys}. Stability is verified via eigenvalue analysis, with all system eigenvalues in the open left-half plane. Fig.~\ref{fig:GFM_gain_phase_test} illustrates the uniform gain and phase conditions. The phase condition certifies stability above $29$~Hz; below this threshold, no conclusion can be drawn, as neither the small-gain nor the small-phase condition is satisfied simultaneously by both converters, rendering the theorem inconclusive. Nonetheless, the GFM converter satisfies a phase condition while the GFL converter satisfies a gain condition, which is precisely where the partitioned condition in Theorem~\ref{thm:ext_mixed_small_gain_phase} applies. Selecting $\mathcal{I}_g=\{1\}$ for the GFL bus and $\mathcal{I}_\phi=\{2\}$ for the GFM bus exploits this mixed structure to enable certification.
Fig.~\ref{fig:GFM_part_gain_phase_test} depicts the partitioned condition, with gain and phase bounds computed per Algorithm~\ref{alg:kappa_gamma_theta} at each frequency using $\eta=0.8$. The partitioned criterion holds below $36$~Hz, delimited by the point at which the GFL gain exceeds $\gamma$. Since the uniform phase condition holds for $f>29$~Hz and the partitioned condition for $f<36$~Hz, Theorem~\ref{thm:ext_mixed_small_gain_phase} certifies the system as stable.

The comparison against the uniform gain and phase limits (dashed lines) is also shown in Fig.~\ref{fig:GFM_part_gain_phase_test}. Partitioning additionally yields a slightly less restrictive gain condition, since the largest admissible gain attainable under partitioning can exceed the uniform one (e.g. as in Fig.~\ref{fig:Lkappa_F1}). Here the margin is large enough that, even after scaling by $\eta=0.8$, the partitioned gain limit remains above the uniform one. On the phase side, by contrast, the partitioned limits are stricter than in the uniform case; nonetheless, these tighter limits do not compromise the stability certification.

\begin{figure}[tbh]
    \centering
    \begin{subfigure}{\columnwidth}
        \centering
        \includegraphics[width=\linewidth]{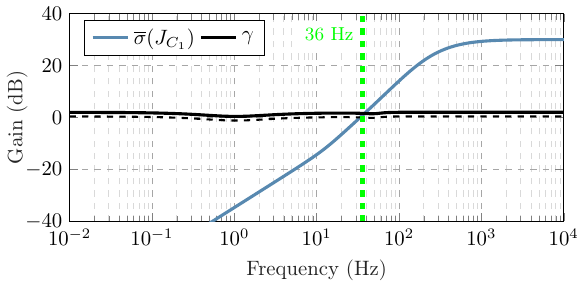}
    \end{subfigure}
    \begin{subfigure}{\columnwidth}
        \centering
        \includegraphics[width=\linewidth]{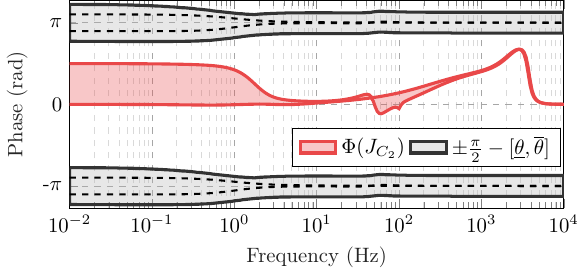}
    \end{subfigure}
    \caption{Partitioned mixed small gain–phase condition for Two-Converter System. Dashed lines represent the network gain and phase limits of the uniform condition, as in Figure \ref{fig:GFM_gain_phase_test}.}
    \label{fig:GFM_part_gain_phase_test}
\end{figure}

\subsection{IEEE 39 Bus System}

\begin{figure}
    \centering
    \includegraphics[width=\linewidth]{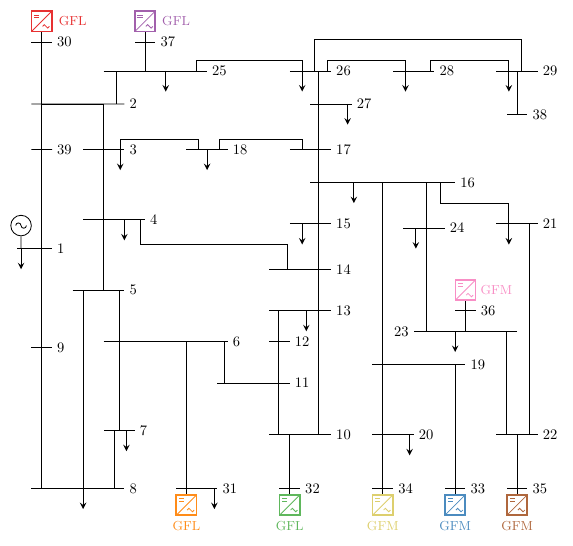}
    \caption{IEEE 39 Bus System}
    \label{fig:39ConvSys}
\end{figure}

As an additional use case, we consider the IEEE 39 Bus system depicted in Figure \ref{fig:39ConvSys}. We consider 4 GFLs and 4 GFMs connected at the locations shown in the figure. All GFMs implement the same control structure, but with different parametrization and operating point \cite{Repo}. Similarly for the GFL converters. 
Figure \ref{fig:39_mixed} shows the uniform mixed decentralized conditions of Theorem \ref{thm:dec_mixed_small_gain_phase}. 
In particular, Figure \ref{fig:39_mixed_gain} displays the gain condition, which is never satisfied since GFMs have higher gain than the network at all frequencies. 
Figure \ref{fig:39_mixed_phase_gfl}, for GFLs, and \ref{fig:39_mixed_phase_gfm}, for GFMs, show the small-phase condition. While GFMs satisfy the condition at all frequencies, GFLs do so only above $60$~Hz. Overall, the small-phase condition is satisfied for $f>60$Hz. 
Therefore, with only Theorem \ref{thm:dec_mixed_small_gain_phase}, we would not be able to certify stability since for $f\le60$Hz, neither condition is satisfied.

In Figure \ref{fig:39_iqc}, the new third condition of the extended Theorem \ref{thm:ext_mixed_small_gain_phase} is shown.
The gain set $\mathcal{I}_g$ is selected as the set of GFLs, i.e. $\mathcal{I}_g=\{30,31,32,37\}$, while the phase set $\mathcal{I}_\phi$ is selected as the set of GFMs, i.e. $\mathcal{I}_\phi=\{33,34,35,36\}$.
The gain limit $\gamma$ and the phase limits $\underline{\theta}$ and $\overline{\theta}$ are computed using Algorithm \ref{alg:kappa_gamma_theta} with $\eta=0.8$.
Figure \ref{fig:39_iqc_gain_gfl} shows that GFLs satisfy the gain condition for $f < 88$ Hz, while \ref{fig:39_iqc_phase_gfm} shows that GFMs satisfy the phase condition at all frequencies. Then, overall the partitioned condition is satisfied for $f<88$~Hz. Therefore, this new condition let us conclude stability of the system, since the small-phase condition holds for $f>60$~Hz and the partitioned gain--phase condition for $f<88$~Hz.
Although we kept $\eta$ constant in this example, it could instead be chosen frequency-dependent; for instance, below $1$~Hz, where the GFLs' gain lies well below the limit, $\eta$ could be decreased to allow for a larger phase on the GFMs.

Figure \ref{fig:39_iqc} shows the comparison with the uniform gain and phase bounds, in dashed lines. Up to $1$~kHz, the partitioned gain condition is less restrictive compared to the uniform one, while the phase limits are stricter than the uniform ones. At higher frequencies, the partitioned bounds are similar to the uniform ones. In general, there is no rule establishing whether the partitioned gain and phase limits are less or more restrictive than the uniform ones; as explained in Section \ref{sec:example_l_k}, this strongly depends on the network characteristics.

\begin{figure}
    \centering
    \begin{subfigure}{\linewidth}
        \centering
        \includegraphics[width=\linewidth]{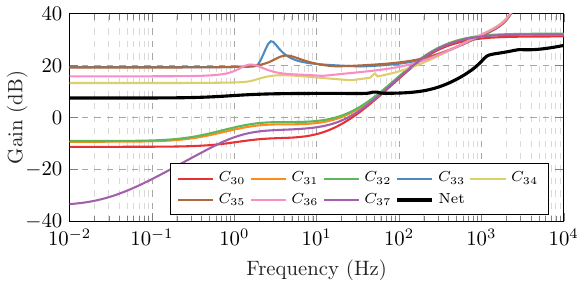}
        \caption{Uniform Small-Gain}
        \label{fig:39_mixed_gain}
    \end{subfigure}
    \begin{subfigure}{\linewidth}
        \centering
        \includegraphics[width=\linewidth]{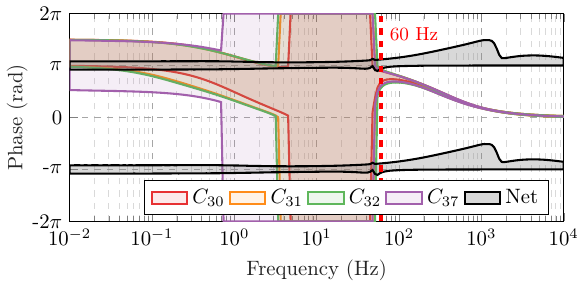}
        \caption{Uniform Small-Phase (Only GFLs)}
        \label{fig:39_mixed_phase_gfl}
    \end{subfigure}
    \begin{subfigure}{\linewidth}
        \centering
        \includegraphics[width=\linewidth]{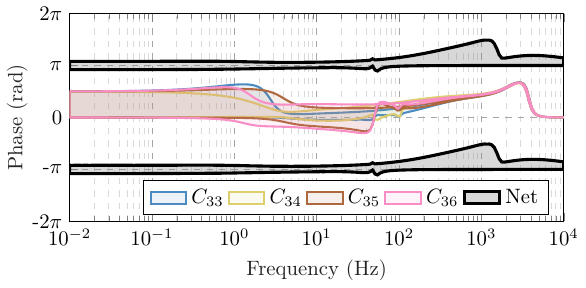}
        \caption{Uniform Small-Phase (Only GFMs)}
        \label{fig:39_mixed_phase_gfm}
    \end{subfigure}
    \caption{IEEE 39 Bus System: Uniform Small Gain--Phase Conditions.}
    \label{fig:39_mixed}
\end{figure}

\begin{figure}
    \centering
    \begin{subfigure}{\linewidth}
        \centering
        \includegraphics[width=\linewidth]{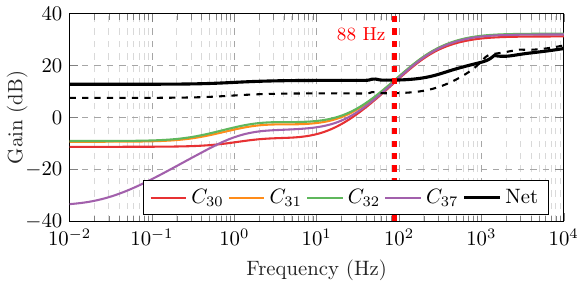}
        \caption{Partitioned Conditions: Small-Gain (Only GFLs)}
        \label{fig:39_iqc_gain_gfl}
    \end{subfigure}
    \begin{subfigure}{\linewidth}
        \centering
        \includegraphics[width=\linewidth]{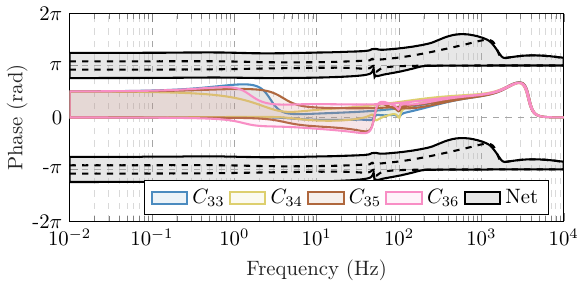}
        \caption{Partitioned Conditions: Small-Phase (Only GFMs)}
        \label{fig:39_iqc_phase_gfm}
    \end{subfigure}
    \caption{IEEE 39 Bus System: Partitioned Small Gain--Phase Conditions. Dashed lines represent the network gain and phase limits of the uniform condition, as in Figure \ref{fig:39_mixed}.} 
    \label{fig:39_iqc}
\end{figure}

%% file: secN_Conclusion.tex
\section{Conclusion}
\label{sec:conc}
This paper introduced a partitioned mixed gain–phase decentralized stability criterion. Rather than forcing every converter to satisfy the same type of condition at each frequency, it partitions them into subsets certified simultaneously through different requirements, a small-gain bound for one subset and a small-phase bound for the other, with the trade-off governed by a network-dependent quadratic constraint.
The method certified stability on a two-converter and the IEEE 39-bus system, where the uniform conditions were previously inconclusive.

The central value of the approach is that it turns a uniform test into a technology-aware certificate that remains fully local at the converter level. By assigning gain conditions to grid-following units and phase conditions to grid-forming units, it exploits the complementary characteristics of the two technologies at low frequencies, the low gain of GFLs and the favorable phase of GFMs, that a uniform test cannot leverage. Because each converter is also tied to a specific bus, the partition additionally exploits network location awareness as a byproduct: it can relax the gain limit at electrically strong buses while enforcing phase at weak ones, though the size of this margin depends strongly on grid heterogeneity and topology.

The criterion naturally extends beyond the gain/phase dichotomy: since each
requirement is encoded in a separate block of the multiplier, passivity indices can be incorporated on the same footing, assigning to each subset the condition that best matches its characteristics. In practice, we envision two complementary uses.
In a centralized setting, the partitioned criterion is applied as a second stage wherever the uniform conditions fail, grouping the converters according to the condition each of them satisfies. In a decentralized setting, the operator selects the type of requirement (gain, phase, or even passivity index) for each bus, based on the expected converter technology and on operational experience, computes the corresponding limits from the network model, and provides them to manufacturers as local compliance conditions, verifiable without knowledge of the rest of the system.